\documentclass{article}

\usepackage[utf8]{inputenc}
\usepackage{repsize}
\usepackage{latexsym}
\usepackage{defs}
\usepackage{figdefs}
\usepackage{graphicx}
\usepackage{algs}
\usepackage{amsmath,amssymb,amsfonts}
\usepackage{color}


\makeatletter
\date{}
\title{Approximation Algorithms for the Two-Watchman Route in a Simple Polygon}

\author{
Bengt J.~Nilsson\thanks{Malmö University, SE-205~06~~Malmö, Sweden. email:~{\tt bengt.nilsson.TS@mau.se}. ORCID:~0000-0002-1342-8618.
}
\and
Eli Packer\thanks{Yoom, Israel. email:~{\tt eli@yoom.com}.
}
}
\makeatother

\setfigdir{Figures/}
\input{math.tex}

\setcounter{secnumdepth}{4}

\begin{document}
\maketitle

\begin{abstract}
The {\em two-watchman route problem\/} is that of computing a pair of closed tours in an environment so that the two tours together see the whole environment and some length measure on the two tours is minimized. Two standard measures are: the minmax measure, where we want the tours where the longest of them has smallest length, and the minsum measure, where we want the tours for which the sum of their lengths is the smallest. It is known that computing a minmax two-watchman route is NP-hard for simple rectilinear polygons and thus also for simple polygons. Also, any $c$-approximation algorithm for the minmax two-watchman route is automatically a $2c$-approximation algorithm for the minsum two-watchman route. We exhibit two constant factor approximation algorithms for computing minmax two-watchman routes in simple polygons with approximation factors $5.969$ and $11.939$, having running times $O(n^8)$ and $O(n^4)$ respectively, where $n$ is the number of vertices of the polygon. We also use the same techniques to obtain a $6.922$-approximation for the {\em fixed two-watchman route problem\/} running in $O(n^2)$ time, i.e., when two starting points of the two tours are given as input.
\end{abstract}

\section{Introduction}\label{sec:intro}

Some of the most intriguing problems in computational geometry concern visibility and motion planning in polygonal environments. A classical problem is that of computing a {\em shortest watchman route\/} in an environment, i.e., the shortest closed tour that sees the complete free-space of the environment. Watchman routes can either be {\em fixed}, requiring the tour to pass a given boundary point or {\em floating}, with no requirement to pass any specific point. These problems have been shown NP-hard~\cite{ChiNta:rectwatchman,DumTot:holewatchman} and even $\Omega(\log n)$-inapproximable~\cite{Mit:approxwatchman} for polygons with holes having a total of $n$ segments.

After a sequence of false starts~\cite{ChiNta:watchman,HamNil-SWR:conf,TanHir:watchman,TanHirIna:watchman}, 
Tan~{\em et al}.~\cite{TanHirIna:corrwatchman} prove an $O(n^4)$ time dynamic programming algorithm for computing a shortest fixed watchman route through a given boundary point in a simple polygon. This is later improved to $O(n^3\log n)$ time by Dror~{\em et al}.~\cite{DroEfrLubMit:touringpolygons} and to $O(n^3)$ time by Tan and Jiang~\cite{TanJia:touringpolygons}. Carlsson~{\em et al}.~\cite{CarJonNil:journal} show how to generalize an algorithm that computes a shortest fixed watchman route to compute a shortest floating watchman route in a simple polygon with a quadratic factor overhead. Tan~\cite{Tan:watchman} improves this to a linear factor overhead. Hence, the currently best algorithm for a shortest floating watchman route in a simple polygon uses $O(n^4)$~time.

Given the relatively high polynomial time complexity for computing watchman routes in simple polygons, efficient approximation algorithms are also of interest. Nilsson~\cite{Nil-approxSWR:journal} and Tan~\cite{Tan:approxwatchman} have independently developed linear time approximation algorithms for a shortest floating watchman route in a simple polygon.

The more general problem of computing multiple watchman routes that together see the environment has received much less attention. Mitchell and Wynters~\cite{MitWyn:watchmen} show that already computing the pair of tours that together see a simple rectilinear polygon is NP-hard, if we want to minimize the length of the longest of the two tours, the {\em minmax\/} measure. It is still an open problem whether it is possible to compute a pair of tours for which the sum of the lengths of the two tours is minimal, the {\em minsum\/} measure, in polynomial time. 
Packer~\cite{Pac:multiplewatchmen} gives some experimental results for multiple watchman routes in simple polygons. 
For point sized watchmen, so-called {\em static guards}, Belleville~\cite{Bel:master,Bel:cover} shows an efficiently computable characterization of all simple polygons that are two-guardable with point guards. 

\paragraph*{Our Results.}
We present a polynomial time constant factor approximation algorithm to compute a minmax or minsum pair of tours that together see a simple polygon. We first consider the floating version of the problem and obtain a $5.969$-approximation algorithm for the minmax pair of tours and $11.939$-approximation for the minsum pair of tours that runs in $O(n^8)$ time, where $n$ is the number of vertices of the polygon.  

In the next three sections, we provide some preliminary results and prove some crucial properties that we use continuously in the sequel. In Section~\ref{sec:algorithm}, we give the algorithm for the minmax two-watchman route prove its correctness and analyze its running time. In Section~\ref{sec:tradeoff}, we show how to modify the previous algorithm to run in $O(n^4)$ time while maintaining constant approximation factor, albeit only guaranteeing a factor twice as large as the previous algorithm.
In Section~\ref{sec:fixed} we modify the algorithm to handle the fixed two-watchman route, the case when we have fixed starting points for the tours that they have to pass through, arriving at an $O(n^2)$ time algorithm with approximation factor~$6.922$. We conclude the presentation in Section~\ref{sec:conc}.

\section{Preliminaries}\label{sec:prelim}
Let \P\ be a simple polygon having $n$ vertices and let \dP\ denote the boundary of \P. We say that two points in \P\ {\em see\/} each other, if the line segment connecting the points does not intersect the exterior of \P. For any arbitrary connected object \X\ inside \P, we denote by \VP{\X} the {\em weak visibility polygon\/} of \X\ in \P, i.e., the set of points in \P\ that see some point of~\X.
The boundary of a visibility polygon \VP{\X} consists of edges that are either (sub)edges of \P\ or edges that have their endpoints on \dP\ but their interior points in the interior of \P. These latter edges are denoted the {\em windows\/} of \VP{\X} and they have at least one endpoint on a reflex vertex of \P.
We henceforth assume the existence of linear time algorithms to compute \VP{\X} when \X\ is a point or a segment inside~\P. Such algorithms have previously been presented in the literature~\cite{GooOro:handbook,LeePre:shortpath,SacUrr:handbook}.

A {\em cut\/} is a directed line segment in \P\ with both end points on \dP\ and each interior point is an interior point of \P. Hence, a directed segment incident to a polygon edge or a directed segment intersecting more than two vertices is not a cut. A cut always separates \P\ into exactly two sub-polygons of nonzero area.
If a cut is represented by the segment $[\p,\q]$ we say that the cut is directed from \p\ to \q\ and we call \p\ the {\em start point\/} of the cut and \q\ the {\em end point\/} of the cut. For a cut $c$ in \P, we define the {\em left polygon}, \lhp{c}, to be the set of points in \P\ locally to the left of $c$ according to $c$'s direction.

Assume a counterclockwise walk of \dP. Such a walk imposes a direction on each of the edges of \P\ in the direction of the walk. Consider a reflex vertex of \P. The two edges incident to the vertex can each be extended inside \P\ until the extensions reach a boundary point. These extended segments form cuts given the same direction as the edge they are collinear to. We call these cuts {\em extensions}; see Figure~\ref{defsfigcex}(a) and denote the set of  extensions in \P\ by~\EE.
\begin{figure*} 
			    \begin{center} \small
			    \input{\figdirdefsfigcex.pdf_t}

			    \parcaption{\capw}{\label{defsfigcex}Illustrating the definitions and a counterexample.}
			    \end{center}
			    \end{figure*}

We define a {\em guard set\/} to be any set of points \GG\ that together see all of \P, i.e., $\bigcup_{g\in\GG}\VP{g}=\P$. It is clear that any guard set must have points intersecting \lhp{\e} for every extension \e\ of~\P, since otherwise the edge collinear to \e\ will not be seen by the guard set; see Figure~\ref{defsfigcex}(a). 
Chin and Ntafos~\cite{ChiNta:watchman} prove that this is indeed also a sufficient requirement when the guard set is connected, as it is for a shortest watchman route. For disconnected guard sets, it is easy to construct examples where this requirement is not sufficient; see Figure~\ref{defsfigcex}(b) where the guard set consisting of the three marked convex vertices has points to the left of each essential extension but it does not see the complete polygon. 

Let $c$ be a cut. If a guard set \GG\ intersects \lhp{c}, we say that $c$ is {\em covered\/} by \GG. Furthermore, if \GG\ intersects the interior of \lhp{c}, then \GG\ {\em properly covers\/} $c$. If \GG\ properly covers $c$ and intersects $c$, we say that \GG\ {\em crosses\/} $c$. Finally, if \GG\ covers $c$, but does not properly cover $c$, then \GG\ {\em reflects\/} on~$c$. 


We also make use of the fact that shortest paths in \P\ between combinations of segments and points can be computed efficiently~\cite{GooOro:handbook,GuiHerLevShaTar:visandshortpath,LeePre:shortpath}. We denote the shortest path between two objects $X$ and $Y$ in \P\ by~\SP(X,Y).

Let \Xa\ and \Xb\ be two closed polygonal cycles contained in a simple polygon \P, such that each point in \P\ sees some point on \Xa\ or \Xb. We call the pair $\XX=(\Xa,\Xb)$, a {\em two-watchman route}. The length of a cycle \X\ in \P\ is denoted \len{\X} and we let $\sumlen{\XX}\defeq\len{\Xa}+\len{\Xb}$ be the {\em sum length\/} of \XX\ and $\maxlen{\XX}\defeq\max\big\{\len{\Xa},\len{\Xb}\big\}$ be the {\em max length\/} of~\XX.

Let $\SS=(\Sa,\Sb)$ and $\TT=(\Ta,\Tb)$ be two two-watchman routes such that $\sumlen{\SS}\leq\sumlen{\XX}$ and $\maxlen{\TT}\leq\maxlen{\XX}$ for any two-watchman route \XX\ in~\P.
We say that \SS\ is a {\em minsum\/} two-watchman route and \TT\ is a {\em minmax\/} two-watchman route.
The following inequalities are immediate from the definitions, for any two-watchman route \XX,
\begin{align}
\maxlen{\XX}=&\max\big\{\len{\Xa},\len{\Xb}\big\}\leq\len{\Xa}+\len{\Xb}=\sumlen{\XX}, \label{math:maxleqsum} \\
\sumlen{\XX}=&\len{\Xa}+\len{\Xb}\leq2\max\big\{\len{\Xa},\len{\Xb}\big\}=2\maxlen{\XX}, \label{math:sumleq2max}
\end{align}
and therefore
\begin{equation}\label{math:sleq2tleq2s}
\maxlen{\TT} \leq \sumlen{\SS} \leq 2\maxlen{\TT}.
\end{equation}
Hence, computing a $c$-approximation for one measure also gives at most a $2c$-approx\-ima\-tion for the other measure.

One could imagine that there always is a cut in the polygon such that the two tours that are shortest watchman tours for the two sub-polygons formed are almost as short as the optimum two-watchman tour. Trying out all possible cuts, and solving the two single watchman tour subproblems in each case would give a simple and efficient algorithm with good approximation ratio. One would then imagine wrong, as the counterexample in Figure~\ref{defsfigcex}(c) shows. The red and blue point sized tours, each see the grey region and the regions of their own color. Here, any partition of the polygon into two pieces by a cut will make the watchman tour solution of each piece infinitely longer than the optimal solution. The example can easily be modified for non-point sized tours.


\section{General Properties of Two-Watchman Routes}\label{sec:properties}

We make the following assumption about the polygons considered in this section.
\begin{assumption}\label{ass:notguardable}
The polygon \P\ considered in this section is simple and not guardable by one or two point guards. Hence, \P\ is not convex or starshaped.
\end{assumption}

The following lemma is at the heart of our construction of an algorithm to compute a two-watchman route.
\begin{lemma}\label{lem:boundary}
If two tours in \P\ see all of\/ \dP, then they see all of\/~\P.
\end{lemma}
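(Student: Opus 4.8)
The plan is to prove the contrapositive of a visibility-coverage fact: I want to show that any point $\p\in\P$ not seen by the two tours would force some boundary point to also be unseen. Concretely, suppose the two tours $\Xa,\Xb$ together see all of $\dP$, and suppose for contradiction that there is an interior point $\p\in\P$ seen by neither tour. The goal is to derive that then some point of $\dP$ is also not seen by either tour, contradicting the hypothesis.

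First I would consider the weak visibility polygon $\VP{\Xa\cup\Xb}$ of the union of the two tours, and let $\p$ be a point of $\P$ outside it. Since $\VP{\Xa\cup\Xb}$ is a (possibly disconnected) sub-region of $\P$ whose boundary consists of subedges of $\dP$ and of windows (as recalled in the Preliminaries, each window has an endpoint at a reflex vertex and its interior lies inside $\P$), the unseen region containing $\p$ is bounded partly by windows and partly by pieces of $\dP$. The key geometric step is to argue that a maximal connected unseen region cannot be bounded \emph{entirely} by windows: a window is a chord of $\P$ induced by a reflex vertex blocking the view from the tours, and the unseen pocket behind such a window must be closed off by actual polygon boundary $\dP$. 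Thus the unseen region containing $\p$ must contain at least one point of $\dP$ on its boundary, and that boundary point (or a point just inside the region near it) is then itself unseen by both tours.

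The main obstacle I expect is making the ``a pocket must touch the boundary'' argument rigorous, i.e., ruling out the degenerate possibility that the unseen region is a ``floating'' region bounded only by windows with no incident polygon edge. I would handle this by a shortest-path / ray-shooting argument: take the unseen point $\p$ and shoot a ray, or walk along a shortest path from $\p$ inside $\P$ toward $\dP$; since $\P$ is simply connected and $\p\in\P$, this path reaches $\dP$, and I would argue that the first window it would have to cross to become visible cannot lie between $\p$ and the nearest boundary point in a way that leaves that boundary point visible. More cleanly, I would use the structural fact that each window is anchored at a reflex vertex $\v$ and that the portion of $\dP$ giving rise to that window lies on the far side of the window from the tours; consequently the boundary immediately beyond the window, being occluded by the same reflex vertex $\v$ that created the window, is unseen. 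This pins an unseen boundary point and closes the contradiction.

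In summary, the proof proceeds by contradiction, analyzing the boundary structure of $\VP{\Xa\cup\Xb}$ into polygon subedges and windows, and showing every maximal unseen region is incident to $\dP$ so that an unseen interior point forces an unseen boundary point. The crux is the window-anchoring argument tying each occluding window to an occluded stretch of $\dP$ behind it.
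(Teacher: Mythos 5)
Your overall strategy (contradiction: an unseen interior point forces an unseen boundary point) is the same as the paper's, but your key step does not go through, and the gap is exactly where the two-tour nature of the problem bites. Your certificate of invisibility for boundary points is the occlusion argument: ``the boundary immediately beyond the window, being occluded by the same reflex vertex that created the window, is unseen.'' That is a statement about one visibility polygon at a time. The stretch of \dP\ hidden behind a window of \VP{\Xa} is indeed unseen by \Xa, but it may be entirely visible to \Xb\ (and symmetrically), while the interior point \p\ remains unseen by both; in that configuration your argument exhibits no unseen boundary point. The unseen component containing \p\ is the intersection of a pocket of \VP{\Xa} with a pocket of \VP{\Xb}, and one must analyse how the two windows \wa\ and \wb\ and their occluded boundary arcs interact --- whether \Xb\ lies inside or outside the pocket cut off by \wa, and whether \wa\ and \wb\ cross --- before an unseen point of \dP\ can be produced. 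This is precisely the three-case analysis in the paper's proof, which in the hard cases does not use occluded arcs at all: it takes the first segment of the shortest path from \p\ to \Xb\ (respectively, to the intersection point of \wa\ and \wb), notes that it turns at a reflex vertex, extends that segment backwards from \p\ until it hits \dP, and certifies the hit point unseen by both tours via a shortest-path argument.

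A second, related weakness: from ``the unseen region must contain at least one point of \dP\ on its boundary'' you cannot conclude that this point is unseen. Visibility polygons are closed, so a boundary point in the closure of the unseen region can perfectly well be seen --- the endpoints of windows are exactly such points --- and your parenthetical fallback ``or a point just inside the region near it'' yields an interior point, which does not contradict the hypothesis that all of \dP\ is seen. To make your route work you would need the closure of the unseen component to meet \dP\ in an arc of positive length whose relative interior is unseen by \emph{both} tours, and establishing that is essentially equivalent to the case analysis you are trying to avoid. The ray-shooting fallback (``walk from \p\ toward \dP'') also fails as stated, since the boundary point nearest to an unseen interior point can itself be seen.
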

\begin{proof}
We do a proof by contradiction. Let \Xa\ and \Xb\ be two tours in \P\ and assume that \p\ is an interior point of \P\ not seen by any of them. We show that there must be some boundary point that is also not seen by the tours contradicting that \Xa\ and \Xb\ together see \dP. Let \VP{\Xa} and \VP{\Xb} be the two visibility polygons of the tours. Each of them must have boundary segments \wa\ and \wb\ separating the subpolygon containing \p\ from the subpolygon containing the tour. The segment \wa\ cuts \P\ into two subpolygons. Let \Pxa\ be the subpolygon containing \Xa\ and let \Pp\ be the subpolygon containing \p. If \Xb\ intersects \wa, we interchange the roles of \Xa\ and \Xb, thus we can assume that \Xb\ is either completely in \Pxa\ or completely in \Pp. We have three cases:
\begin{description}
\item[\Xb\ lies in \Pxa.]
Let \SP(\p,\Xb) be the shortest path from \p\ to \Xb\ and let \s\ be the first segment of this path. It connects \p\ with a reflex vertex \v\ on the boundary of \P. Extend \s\ away from \v\ until it hits the boundary at \p'; see Figure~\ref{boundaryproof}(a). The point \p'\ is clearly not seen by \Xb\ since if it were, then \p\ would also bee seen by \Xb. The point \p'\ is not seen by \Xa\ either since \SP(\p,\Xb) crosses \wa\ at some point \q\ with \SP(\p,\q) in \Pp\ and since \SP(\p,\q) is a shortest path, the extension from \p\ to \p'\ cannot cross~\wa; see Figure~\ref{boundaryproof}(b).
\item[\Xb\ lies in \Pp\ and \wa\ and \wb\ intersect.]
Let \q\ be the intersection point between \wa\ and \wb\ and let \s\ be the first segment of \SP(\p,\q). It connects \p\ either with a reflex vertex \v\ on the boundary of \P\ or with \q. Extend \s\ away from \v\ (or \q) until it hits the boundary at \p'; see Figure~\ref{boundaryproof}(c). The point \p'\ cannot be seen by \Xa\ or \Xb\ since \SP(\p,\q) lies outside both \VP{\Xa} and \VP{\Xb} (except for the point \q), and hence, since \SP(\p,\q) is a shortest path, the extension from \p\ to \p'\ cannot cross any of \wa\ or~\wb.
\item[\Xb\ lies in \Pp\ and \wa\ and \wb\ do not intersect.]
It is clear that there exist points on the boundary not seen by the two tours since the boundary points close to the end points of \wa\ not in \Pxa\ and the boundary points close to the end points of \wb\ not in \Pxb\ are not seen by any of the tours; see Figure~\ref{boundaryproof}(d).
\end{description}
This completes the proof.
\begin{figure*} 
			    \begin{center} \small
			    \input{\figdirboundaryproof.pdf_t}

			    \parcaption{\capw}{\label{boundaryproof}Illustrating the proof of Lemma~\protect\ref{lem:boundary}.}
			    \end{center}
			    \end{figure*}
\end{proof}
The lemma implies that it is sufficient that our algorithm constructs two tours that together see the whole boundary of \P\ to guarantee that all of \P\ is guarded.

Consider two tours \Xa\ and \Xb\ and a polygon boundary edge~\be.
\begin{lemma}\label{lem:edgeconnected}
For any two tours \Xa\ and \Xb\ and a polygon boundary edge \be, the sets $\VP{\Xa}\cap\be$, $\VP{\Xb}\cap\be$ and $\VP{\Xa}\cap\VP{\Xb}\cap\be$ are each connected.
\end{lemma}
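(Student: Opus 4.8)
The plan is to first reduce the three statements to a single one. Since \be\ is a straight segment, both $\VP{\Xa}\cap\be$ and $\VP{\Xb}\cap\be$ are subsets of a one-dimensional convex set, so proving that either is connected amounts to proving that it is a single subinterval of \be. Granting this, the third set $\VP{\Xa}\cap\VP{\Xb}\cap\be$ is the intersection of two subintervals of \be, and the intersection of two intervals on a line is again an interval, hence connected. As the claims for \Xa\ and \Xb\ are symmetric, it remains to prove that $\VP{\Xa}\cap\be$ is connected, i.e., an interval.

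To do this I would pick two points $\p,\q\in\VP{\Xa}\cap\be$ and an arbitrary point \v\ of \be\ strictly between them, and show that $\v\in\VP{\Xa}$. Let \pa\ be a point of \Xa\ seen by \p\ and \pb\ a point of \Xa\ seen by \q; since a tour is connected there is a path $\gamma$ inside \Xa\ joining \pa\ to \pb. Consider the closed curve $\Gamma$ built from the sightline $[\p,\pa]$, the path $\gamma$, the sightline $[\pb,\q]$, and the subsegment of \be\ running from \q\ back to \p. All four pieces lie in \P, so $\Gamma\subseteq\P$; because \P\ is simply connected, its exterior is connected, unbounded, and disjoint from $\Gamma$, hence contained in the single unbounded face of the complement of $\Gamma$. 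Consequently every bounded face of that complement lies in \P. Now I would shoot a probe from \v\ perpendicular to \be\ into the interior: it enters the bounded region enclosed by the arc of $\Gamma$ that runs over the interval from \p\ to \q, and stays in \P\ until it first meets $\Gamma$ at a point \rr, with the open segment $(\v,\rr)$ contained in \P. Thus \v\ sees \rr, and if \rr\ happens to lie on $\gamma$ we are done, since then \v\ sees a point of \Xa.

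The main obstacle is the remaining case, where the probe first meets one of the witness sightlines, say $[\p,\pa]$, instead of $\gamma$ — geometrically, the sightline from \p\ ``leans'' over \v. To deal with it I would choose \p\ and \q\ extremally, so that the whole open interval from \p\ to \q\ is disjoint from \VP{\Xa}; then the loss of visibility at \p\ occurs across a window of \VP{\Xa}, which by the window property recorded above has an endpoint at a reflex vertex of \P, and this reflex vertex may be taken collinear with \p\ and \pa. The shadow cast by that reflex vertex fixes on which side of the window the invisible pocket above the interval lies, and I would use this to rotate the probe direction away from the perpendicular, toward \pa, until it clears the leaning sightline and meets $\gamma$ instead. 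Equivalently, the aim is to show that inside the simply connected polygon the path $\gamma$ cannot route all the way around the invisible pocket without being exposed to some probe fired from \v. Turning this picture — the interplay of the leaning sightline, the reflex vertex, and the straightness of \be\ — into a rigorous clearing argument is the crux of the proof.

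Finally I would record that both hypotheses are essential and are used exactly here: connectedness of the tour is what supplies the path $\gamma$, while the straightness of \be\ is what forbids the wrap-around pocket (for a curved boundary arc, or for a disconnected guard set, the statement fails). With the interval property of $\VP{\Xa}\cap\be$ and $\VP{\Xb}\cap\be$ in hand, the three connectivity claims follow as in the reduction of the first paragraph.
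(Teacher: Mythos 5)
Your overall strategy is the same as the paper's: reduce the claim about $\VP{\Xa}\cap\VP{\Xb}\cap\be$ to the first two claims via convexity of subintervals of \be, and then show that a point \v\ of \be\ lying between two visible points \p\ and \q\ is itself visible by enclosing it in a region bounded by the two witness sightlines, a path of the tour joining the witness points, and the subsegment of \be. The problem is that your argument stops at exactly the step where all the work is. In the case where the probe from \v\ first meets a leaning sightline such as $[\p,\pa]$ rather than the tour, you only sketch a plan (rotate the probe toward \pa, invoke the window/reflex-vertex structure) and then state that making this rigorous ``is the crux of the proof.'' That is an unproved key step, not a proof. Moreover, your setup already wobbles before that point: if the two sightlines $[\p,\pa]$ and $[\q,\pb]$ cross each other, the closed curve $\Gamma$ is not simple, so ``the bounded face enclosed by the arc of $\Gamma$ running over the interval from \p\ to \q'' is not well defined and the probe argument does not even get started. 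You never address this crossing configuration.

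The paper closes precisely these two holes by a case split on whether the witness sightlines intersect. If they do not intersect, it forms the simple polygon \Ri\ whose boundary is the tour chain from \pb\ to \pa\ plus only the three extra edges adjacent to the two visible points, and concludes that the chain sees the whole of \Ri, in particular the allegedly invisible point of \be. If they do intersect, it takes an intersection point of the two sightlines as the apex of a triangle with the two tour points, forms the polygon \Rip\ bounded by the tour chain and the two triangle sides meeting at that apex, and extends the segment from the invisible point through the apex into \Rip; since the only two non-tour edges of \Rip\ meet at the apex, this extension must exit \Rip\ through the tour chain, which yields the visibility you were trying to establish. This apex-extension trick is the ``rigorous clearing argument'' your proposal is missing; without it (or a completed version of your rotation argument, including the crossing case) the proof does not go through.
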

\begin{proof}
If the set under consideration is empty, the lemma follows by definition so assume in each case that it is nonempty. If any of $\VP{\Xi}\cap\be$, for $i\in\{1,2\}$, is connected, it is a subsegment of \be\ and therefore convex. Since the intersection of two connected and convex sets is also a connected and convex set, it follows that the connectedness of $\VP{\Xa}\cap\VP{\Xb}\cap\be=\bigcap_{i\in\{1,2\}} \VP{\Xi}\cap\be$ is an immediate consequence, if we can show that the two sets $\VP{\Xi}\cap\be$, for $i\in\{1,2\}$, are connected.

We make a proof by contradiction and assume that $\VP{\Xi}\cap\be$ is disconnected, for each $i\in\{1,2\}$. By following \be\ from one end point to the other we pass each of the connected components giving us an ordering of them. Let \ra\ be a point in the first component and let \rb\ be a point in the last component. Each point \ra\ and \rb\ is seen by \Xi. Hence, for \Xi\ there is a point \rr'\ on \be\ between \ra\ and \rb\ that is not seen by \Xi. Let \pj, for $j\in\{1,2\}$, be two points on \Xi\ that see \rj\ respectively. Without loss of generality, we can assume that $[\pj,\rj]$ does not intersect \Xi\ except at \pj, for $i\in\{1,2\}$. If the segments $[\pa,\ra]$ and $[\pb,\rb]$ do not intersect, we construct a closed simple polygon \Ri\ in \P\ as follows: follow \be\ from \ra\ to \rb, from \rb\ to \pb, from \pb\ to \pa\ along a simple path in \Xi, and finally from \pa\ to \ra. The only edges of the constructed polygon \Ri\ that are not part of \Xi\ are the three edges adjacent to \ra\ and \rb\ so \Ri\ is completely seen by the part of the tour \Xi\ connecting \pb\ to \pa. Hence, the point \rr'\ is also seen since it lies on \be\ between \ra\ and \rb, giving us a contradiction.

On the other hand if the segments $[\pa,\ra]$ and $[\pb,\rb]$ intersect, let \q\ be an intersection point between $[\pa,\ra]$ and $[\pb,\rb]$. The three points \pa, \pb, and \q\ form a triangle interior to \P\ and we construct a closed simple polygon \Rip\ in \P\ having \q\ as a vertex as follows: follow a simple path of \Xi\ from \pa\ to \pb, a straight edge from \pb\ to \q, and a straight edge from \q\ to \pa. If we extend the segment $[\p',\q]$ inside \Rip, it must intersect \Xi, since \pa\ connects to \pb\ in \Rip\ using a simple path of \Xi. Hence, the point \p'\ is also seen from \Xi, again giving us a contradiction.
\end{proof}

From Lemma~\ref{lem:edgeconnected} we have that \Ta\ and \Tb\ both see connected components of any boundary edge of \P. Hence, a boundary edge \be\ can be partitioned into at most three subsegments, at most one of which is seen by both \Ta\ and \Tb\ and the remaining at most two are seen by one of \Ta\ and~\Tb.

Our next lemma shows that $\maxlen{(\Ta,\Tb)}<\len{\Ws}$ and this strict inequality is used in the proof of Lemma~\ref{lem:crossing}.
\begin{lemma}\label{lem:smaller}
Let $(\Ta,\Tb)$ be the shortest minmax two-watchman route and \Ws\ the shortest watchman route. If $\len{\Ws}>0$,
$$\maxlen{(\Ta,\Tb)}<\len{\Ws}.$$
\end{lemma}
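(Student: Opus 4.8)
The plan is to construct, from the shortest (single) watchman route \Ws, a genuine two-watchman route whose max length is strictly smaller than $\len{\Ws}$. Since $(\Ta,\Tb)$ is a minmax two-watchman route, the definition gives $\maxlen{(\Ta,\Tb)}\le\maxlen{\XX}$ for every two-watchman route $\XX$, so it suffices to produce one valid pair with max length below $\len{\Ws}$. Concretely, I would pick two \emph{antipodal} points $a,b$ on \Ws, i.e., points that split the closed tour into two arcs $A_1$ and $A_2$, each of length $\len{\Ws}/2$, and then form two closed tours $\Xa = A_1\cup\SP(b,a)$ and $\Xb = A_2\cup\SP(a,b)$, where each arc is closed back to its starting endpoint by the geodesic between its endpoints inside \P. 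Both $\Xa$ and $\Xb$ are closed polygonal cycles in \P.

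First I would verify that $(\Xa,\Xb)$ is a valid two-watchman route. As point sets $A_1\subseteq\Xa$ and $A_2\subseteq\Xb$, so $\VP{\Xa}\supseteq\VP{A_1}$ and $\VP{\Xb}\supseteq\VP{A_2}$; moreover weak visibility of a union of curves is the union of their weak visibilities, whence $\VP{\Xa}\cup\VP{\Xb}\supseteq\VP{A_1}\cup\VP{A_2}=\VP{A_1\cup A_2}=\VP{\Ws}=\P$. (Alternatively it is enough to note they see all of \dP\ and invoke Lemma~\ref{lem:boundary}.) For the length, $\len{\Xa}\le\len{A_1}+\SP(a,b)=\len{\Ws}/2+\SP(a,b)$, and symmetrically for $\Xb$; since $A_1$ and $A_2$ are both paths between $a$ and $b$ of length $\len{\Ws}/2$, the geodesic satisfies $\SP(a,b)\le\len{\Ws}/2$, giving the weak bound $\maxlen{(\Xa,\Xb)}\le\len{\Ws}$.

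The heart of the argument is upgrading this to a strict inequality. If $\SP(a,b)<\len{\Ws}/2$ we are done immediately, so suppose $\SP(a,b)=\len{\Ws}/2$. Then $A_1$ attains the geodesic distance and is therefore a shortest $a$--$b$ path, and likewise $A_2$; by uniqueness of shortest paths in a simple polygon $A_1=A_2$ as point sets, so \Ws\ is a single geodesic $\gamma=\SP(a,b)$ traversed back and forth. I would treat this degenerate case directly: as a point set $\gamma=\Ws$, so $\VP{\gamma}=\P$; splitting $\gamma$ at its midpoint $m$ into sub-paths $\gamma_1$ (from $a$ to $m$) and $\gamma_2$ (from $m$ to $b$), each of length $\len{\Ws}/4$, and letting $\Xa$ and $\Xb$ be $\gamma_1$ and $\gamma_2$ each traversed back and forth, we get two closed tours with $\VP{\Xa}\cup\VP{\Xb}=\VP{\gamma_1}\cup\VP{\gamma_2}=\VP{\gamma}=\P$ and $\maxlen{(\Xa,\Xb)}=\len{\Ws}/2<\len{\Ws}$, the last step using $\len{\Ws}>0$. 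In every case we have exhibited a valid two-watchman route of max length strictly below $\len{\Ws}$, and hence $\maxlen{(\Ta,\Tb)}\le\maxlen{(\Xa,\Xb)}<\len{\Ws}$.

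I expect the main obstacle to be the strictness, precisely the possibility that the geodesic shortcut saves nothing because \Ws\ degenerates into a doubled geodesic path; ruling this out cleanly requires uniqueness of shortest paths in simple polygons and the handling of the degenerate case above. The remaining ingredients---that weak visibility of a union is the union of weak visibilities, and the triangle-type length bounds $\SP(a,b)\le\len{\Ws}/2$---are routine.
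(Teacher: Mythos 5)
Your proposal is correct and follows essentially the same route as the paper: split \Ws\ at two antipodal points into half-length arcs, close each with the geodesic between the split points, and handle strictness by the dichotomy on whether $\len{\SP(a,b)}<\len{\Ws}/2$ or the tour degenerates to a doubled geodesic, which is then split at its midpoint. Your explicit verification that the constructed pair is a valid two-watchman route (via the union of weak visibility polygons) is a small addition the paper leaves implicit, but the argument is otherwise the same.
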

\begin{proof}
Consider a shortest watchman route \Ws\ and let \p\ and \q\ be two points on \Ws\ such that if you follow the tour \Ws\ from \p\ a distance of $\len{\Ws}/2$ in counterclockwise order, you reach the point \q. It is clear that following the tour from \p\ to \q\ in clockwise order also gives a path of length $\len{\Ws}/2$.

Let \SP(\p,\q) be the shortest path from \p\ to \q\ in \P. If $\len{\SP(\p,\q)}<\len{\Ws}/2$, we construct a two-watchman route $(\Xa,\Xb)$ such that $\maxlen{(\Xa,\Xb)}<\len{\Ws}$ as follows. Let \Xa\ be the tour obtained by following \SP(\p,\q) from \p\ to \q\ and \Ws\ from \q\ to \p\ in counterclockwise order. Since $\len{\SP(\p,\q)}<\len{\Ws}/2$, the length of \Xa\ is strictly smaller than \len{\Ws}. The tour \Xb\ is obtained by following \Ws\ from \p\ to \q\ in counterclockwise order and then \SP(\p,\q) from \q\ to \p. Again, the length of \Xb\ is strictly smaller than \len{\Ws}. 

If $\len{\SP(\p,\q)}=\len{\Ws}/2$, on the other hand, then both the clockwise and counterclockwise paths from \p\ to \q\ along \Ws\ follow \SP(\p,\q), since shortest paths are unique. We construct a two-watchman route $(\Xa,\Xb)$ as follows. Let \rr\ be the midpoint on \SP(\p,\q) and let \Xa\ be the tour obtained by following \SP(\p,\q) from \p\ to \rr\ and back to \p. Similarly, let \Xb\ be the tour obtained by following \SP(\p,\q) from \q\ to \rr\ and back to \q. We have that $\maxlen{(\Xa,\Xb)}=\len{\Ws}/2<\len{\Ws}$ also in this case.

Since $(\Ta,\Tb)$ is the shortest minmax two-watchman route, it has max-length bounded by that of $(\Xa,\Xb)$ and is hence also strictly smaller than~\len{\Ws}, concluding the proof.
\end{proof}%

In the previous section, we defined \EE\ to be the set of extensions of the edges in \P. There is a subdivision of \EE\ into nonempty subsets \Ea\ and \Eb, such that each tour \Ta\ and \Tb\ of a minmax two-watchman route covers the extensions in \Ea\ and \Eb\ respectively. It is clear that neither \Ea\ nor \Eb\ can be empty since if one of them is empty, the other contains all the extensions of \EE. This means that one of \Ta\ or \Tb\ covers all the extensions in \EE, but the shortest tour that covers all extensions in \EE\ is \Ws, the shortest watchman route, contradicting Lemma~\ref{lem:smaller}. Note also that the subdivision is not necessarily a partition since an extension in \EE\ can be covered by both \Ta\ and \Tb. The following stronger claim also holds.

\begin{lemma}\label{lem:crossing}
There exists a minmax two-watchman route $\TT=(\Ta,\Tb)$, such that each tour \Ti\ {\em intersects\/} some extension in~\Ei,~$i\in\{1,2\}$.
\end{lemma}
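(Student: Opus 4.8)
The plan is to start from an arbitrary minmax two-watchman route $\TT=(\Ta,\Tb)$ and replace each tour by a \emph{taut} tour of no greater length that does the same visibility job, using Lemma~\ref{lem:boundary} to certify that the modified pair is again a (minmax) two-watchman route. Concretely, let $A$ be the closure of the part of \dP\ that \Tb\ fails to see; since $\TT$ is a two-watchman route, \Ta\ sees all of $A$. Let \Ua\ be a shortest closed tour in \P\ that sees every point of $A$ (such a tour exists by a standard compactness argument, and $\len{\Ua}\le\len{\Ta}$ because \Ta\ is itself one candidate). Because \Ua\ sees $A$ while \Tb\ sees the remainder of \dP, the pair $(\Ua,\Tb)$ sees all of \dP, hence by Lemma~\ref{lem:boundary} all of \P; moreover $\maxlen{(\Ua,\Tb)}\le\maxlen{(\Ta,\Tb)}$, so $(\Ua,\Tb)$ is again a minmax two-watchman route.

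The crux is to show that \Ua\ \emph{intersects} one of the extensions it covers. Here I would invoke the structural characterization of shortest watchman routes~\cite{ChiNta:watchman,TanHirIna:corrwatchman,Tan:watchman}: a shortest tour seeing a prescribed region is taut and either crosses or reflects on each of its essential cuts, and in both cases shares a point with the cut. If \Ua\ had positive length but touched none of its essential extensions, every such extension would be covered with strict slack, so \Ua\ could be contracted to a strictly shorter tour still covering all of them, contradicting its minimality; hence a non-degenerate \Ua\ meets at least one essential extension. The only remaining case is that \Ua\ degenerates to a single point, i.e.\ $A$ is visible from one viewpoint; then the set of admissible viewpoints is the intersection of the left polygons \lhp{\e} over all essential extensions \e\ for $A$, a nonempty closed convex region whose relative boundary lies on some extension, and I would simply place the point \Ua\ on that extension. (By Assumption~\ref{ass:notguardable}, \Ta\ and \Tb\ cannot both degenerate, but this is not even needed, since the point placement settles the degenerate case directly.) In either case \Ua\ intersects some extension $\ea\in\EE$ that it covers.

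Applying the symmetric construction to \Ub, the shortest tour seeing the closure $B$ of the part of \dP\ not seen by \Ua, yields an extension \eb\ that \Ub\ intersects, while leaving \Ua\ and its contact with \ea\ untouched; the pair $(\Ua,\Ub)$ again sees all of \dP\ and is minmax. Finally I would fix the subdivision of \EE: place \ea\ in \Ea\ and \eb\ in \Eb, and assign every remaining extension of \EE\ to \Ea\ if \Ua\ covers it and otherwise to \Eb\ (it is then covered by \Ub, since together the two tours see all edges and hence cover all extensions). This gives nonempty sets with $\Ea\cup\Eb=\EE$ so that \Ua\ covers \Ea, \Ub\ covers \Eb, and $\Ua,\Ub$ intersect $\ea\in\Ea$ and $\eb\in\Eb$ respectively; renaming $(\Ua,\Ub)$ as the desired $\TT=(\Ta,\Tb)$ establishes the claim. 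I expect the main obstacle to be exactly the key step of the second paragraph, namely rigorously arguing that a length-minimizing tour seeing a union of boundary arcs must be in contact with one of its essential extensions (equivalently, that uniform slack always permits a strict shortening) and cleanly handling the single-point tour; by contrast, the preservation of the two-watchman property under replacement is immediate from Lemma~\ref{lem:boundary}, and the non-increase of the max length is built into the replacement.
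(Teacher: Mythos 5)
There is a genuine gap, and it sits exactly where you yourself flag it: the claim that a positive-length, length-minimal tour seeing a union of boundary arcs must touch an extension of \EE. Minimality alone does not give this. The binding constraints on such a tour are not the edge extensions but the visibility polygons of the hidden points of $A$: a pinning point $\p(\u)\in A$ hidden behind a reflex vertex forces the tour to meet $\VP{\p(\u)}$, whose relevant boundary piece is the window along the sight line $\lx(\u)$ through the hiding vertex, which is in general \emph{not} an extension of a polygon edge. A convex tour can therefore be taut --- pinned by three such windows whose half-planes $\lhp{\lx(\u)}$ intersect pairwise but have empty common intersection, so that no contraction preserves visibility of $A$ --- while touching no member of \EE. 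Your ``uniform slack permits a strict shortening'' step fails in precisely this configuration, and the appeal to the Chin--Ntafos/Tan characterization does not rescue it: that theory concerns tours seeing all of \P, where the essential cuts are by construction the extensions; for a proper boundary subset $A$ the analogous ``essential cuts'' are windows, and contact with a window says nothing about contact with~\EE.

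This pinned-but-untouching configuration is exactly what the paper spends the bulk of its proof on, and its resolution genuinely needs the second tour. The pinning points of \Ta\ are endpoints of maximal arcs seen by \Ta, hence must be grey for \Tb\ as well (white would let \Ta\ drop the pin and shorten; black would leave boundary points unseen by both). The polygon edges behind the hiding vertices must then be seen by \Tb, forcing \Tb\ into the left polygons of the associated extensions $\ex(\u)$; a geometric argument on three sight lines with empty common intersection then forces \Tb\ to cross one of the $\lx(\u)$, making that pinning point white for \Tb\ after all --- whence \Ta\ can be shortened, the contradiction. None of this interplay appears in your argument, even though your definition of $A$ via \Tb\ makes it available in principle. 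Your outer scaffolding --- replacing each tour by a shortest tour for its residual visibility task, preserving the minmax property via Lemma~\ref{lem:boundary}, handling the degenerate point tour through a kernel/extension argument, and assembling $\Ea,\Eb$ at the end --- matches the paper's setup, but the crux of Lemma~\ref{lem:crossing} is the missing step.
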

\begin{proof}
Assume that both tours \Ta\ and \Tb\ are as short as possible. At least one of them will have the length \maxlen{(\Ta,\Tb)} and the other the shortest possible length given that the first tour achieves the length~\maxlen{(\Ta,\Tb)}. We will show the result only for tour \Ta, but the same argument holds also for~\Tb.

Assume first that $\len{\Ta}=0$, i.e., \Ta\ is a point sized tour. This implies that $\P\setminus\!\VP{\Tb}$ is starshaped even though it may be a disconnected set. The set of points in \P\ that see all of $\P\!\setminus\!\VP{\Tb}$, also known as the {\em kernel}, is the intersection of the left halfplanes collinear to the boundary edges from $\dP\cap(\P\!\setminus\!\VP{\Tb})$~\cite{LeePre:kernel}. Left means here locally to the left of the associated edge in the order of the counterclockwise traversal. The kernel boundary cannot be made up only of boundary edges of \P, since that would make \P\ a convex polygon and thus guardable with one point guard contradicting Assumption~\ref{ass:notguardable}. Hence, there is a kernel boundary edge collinear with some extension of \P\ and we can let \Ta\ be a point on this kernel boundary edge.

Now, assume that $\len{\Ta}>0$. We make a proof by contradiction and assume further that \Ta\ does not intersect any extension in \EE. If this is the case, \Ta\ cannot have any reflex vertices, since if \Ta\ does, then any such vertex coincides with a reflex vertex of \P\ and thus \Ta\ intersects the two extensions adjacent to this vertex. Hence, \Ta\ has only convex vertices. If \Ta\ is a line segment, we consider the two endpoints of the segment to be the convex vertices of the tour that goes back and forth between them. 

Consider the points of the boundary \dP\ of \P\ that are seen by \Ta. The visibility from \Ta\ subdivides \dP\ into disjoint (maximal) subpaths and we color the interior points of each subpath {\em white\/} if \Ta\ sees these points and {\em black\/} if \Ta\ does not see the points. The endpoints of each (maximal) subpath is colored {\em grey}. For a color $c\in\{\mbox{\it black}, \mbox{\it grey}, \mbox{\it white}\}$, we say that a point on \dP\ has color $c$ {\em for \Ta}. We can similarly color the boundary for \Tb\ in which case we say that a point has color $c'$ for~\Tb, $c'\in\{\mbox{\it black}, \mbox{\it grey}, \mbox{\it white}\}$.

We refine the coloring of the boundary somewhat by considering the convex vertices of \Ta. Let \u\ be a convex vertex of \Ta. There exists at least one grey boundary point $\p(\u)$ that is seen from \u\ but not from any other point of \Ta. The point $\p(\u)$ must exist, otherwise \Ta\ can be made shorter, contradicting that both tours \Ta\ and \Tb\ are as short as possible. In fact, \u\ can have many such points. We therefore consider a convex vertex \u\ of \Ta\ to have {\em multiplicity} $k$, if there are $k$ different points $\p(\u)$ associated to \u. We differentiate between $\p(\u)$ and $\p(\u')$ even though $\u=\u'$ and has multiplicity at least two. We let the color of each point $\p(\u)$ be {\em dark grey}. The remaining grey points are considered to be {\em light grey}. 

Let $\lx(\u)$ be the maximal line segment in \P\ passing through the points \u\ and $\p(\u)$. The segment $\lx(\u)$ must intersect at least one reflex vertex of \dP\ between \u\ and $\p(\u)$ that hides $\p(\u)$ from $\Ta\!\setminus\!\u$. Let $\v(\u)$ be the reflex vertex of \dP\ on $\lx(\u)$ closest to $\p(\u)$. The line segment $[\p(\u),\v(\u)]$ partitions \P\ into two subpolygons \Pa\ containing \Ta\ and \Pz. Let $\be(\u)$ be the polygon boundary edge adjacent to the reflex vertex $\v(\u)$ in \Pz\ and let $\ex(\u)$ be the extension collinear to $\be(\u)$ inside \Pa. We associate a direction to $\lx(\u)$ so that $\ex(\u)$ lies locally to the left of~$\lx(\u)$. 
Henceforth, we denote \Pz\ by \Peu\ and \Pa\ by \Plu\ to let them depend on the vertex \u\ of~\Ta.
We refer to Figure~\ref{crossingfig}(a) for an illustration of the given definitions.%
\begin{figure*} 
			    \begin{center} \small
			    \input{\figdircrossingfig.pdf_t}

			    \parcaption{\capw}{\label{crossingfig}Illustrating the proof of Lemma~\protect\ref{lem:crossing}.}
			    \end{center}
			    \end{figure*}

The argument for a contradiction relies on the following claim: ``any boundary point that is dark grey for \Ta\ must also be grey for \Tb.'' Assuming boundary point \p\ is dark grey for \Ta, we make the following case analysis.
\begin{description}
\item 
If \p\ is white for \Tb, we have an immediate contradiction since the single convex vertex \u\ of \Ta\ that sees \p\ can be cut away from \Ta\ arbitrarily close to \u, thus shortening the length of \Ta; see Figure~\ref{crossingfig}(b).
\item
If \p\ is black for \Tb, this means that there is open interval on the boundary \dP\ centered at \p\ that is not seen by any point of \Tb. Since \p\ is one endpoint of a maximal subpath that is seen by \Ta, there exist boundary points not seen by either \Ta\ or \Tb, a contradiction.
\end{description}
The rest of the proof now shows that, if \Ta\ does not intersect some extension, then there exists a dark grey boundary point for \Ta\ that is either black or white for \Tb, or \Ta\ can be shortened, thus establishing a contradiction. We do this by a further case analysis.

\begin{description}
\item 
If \Ta\ has two vertices \u\ and \u'\!, such that $\lhpb{\lx(\u)}\cap\lhpb{\ex(\u')}=\emptyset$, then the intersection $\lhpb{\ex(\u)}\cap\lhpb{\ex(\u')}$ is also empty, since $\Big(\lhpb{\ex(\u)}\!\setminus\!\Peu\Big)\subset\lhpb{\lx(\u)}$. This means that \Ta\ lies properly in the region $\P\setminus\Big(\lhpb{\ex(\u)}\cup\lhpb{\ex(\u')}\Big)$ and \Tb\ must intersect the two disjoint regions $\lhpb{\ex(\u)}$ and $\lhpb{\ex(\u')}$. Since $\lhpb{\lx(\u)}\cap\lhpb{\ex(\u')}=\emptyset$, \Tb\ must cross $\lx(\u)$ and hence $\p(\u)$ is white for \Tb\ giving us a contradiction; see Figures~\ref{crossingfig}(a) and~(b).
\item
If all pairs of vertices \u\ and \u'\ of \Ta\ (with multiplicity) have the property that $\lhpb{\lx(\u)}\cap\lhpb{\ex(\u')}\neq\emptyset$, then since $\Big(\lhpb{\ex(\u')}\setminus\Peup\Big)\subset\lhpb{\lx(\u')}$ it also holds that $\lhpb{\lx(\u)}\cap\lhpb{\lx(\u')}\neq\emptyset$ and we have further subcases.
\begin{description}
\item
If $\bigcap_{\u\in\UU}\lhpb{\lx(\u)}\neq\emptyset$, where \UU\ is the set of vertices of \Ta\ (with multiplicity), then let \q\ be a point in $\bigcap_{\u\in\UU}\lhpb{\lx(\u)}$ and let \ua\ and \ub\ be two distinct vertices of \Ta\ such that \Ta\ is completely contained in the cone $\angle\ua,\q,\ub$. We note that \Ta\ is the shortest tour that visits the regions \lhpb{\lx(\u)}, for $\u\in\UU$. 

Let $\bigtriangleup\ua,\q,\ub$ denote the triangle with corners at \ua, \q, and \ub. The interior region of $\angle\ua,\q,\ub\setminus\!\bigtriangleup\ua,\q,\ub$ cannot contain convex vertices of \Ta, since any such vertex \u'\ would have $\q\not\in\lhpb{\lx(\u')}$ contradicting that $\bigcap_{\u\in\UU}\lhpb{\lx(\u)}\neq\emptyset$; see Figure~\ref{crossingfig}(c). \Ta\ thus connects \ua\ and \ub\ by a line segment.

Vertex \ua\ (with multiplicity) intersects $\bigcap_{\u\in\Uua}\!\!\lhpb{\lx(\u)}$ for some subset \Uua\ of \UU. Similarly, the vertex \ub\ (with multiplicity) intersects $\bigcap_{\u\in\Uub}\!\!\lhpb{\lx(\u)}$ for some subset \Uub\ of \UU, with $\Uua\!\cap\Uub=\emptyset$.
Since the intersection of locally convex regions is also locally convex, any point on the line segment $[\ua,\q]$ intersects $\bigcap_{\u\in\UU_{\ua}}\!\lhp{\lx(\u)}$ and similarly, any point on the line segment $[\ub,\q]$ intersects $\bigcap_{\u\in\UU_{\ub}}\!\!\lhpb{\lx(\u)}$. Thus we can move \ua\ and \ub\ (with multiplicity) along their corresponding line segments towards \q, thus shortening \Ta, giving us a contradiction; see Figure~\ref{crossingfig}(c).

\item 
If $\bigcap_{\u\in\UU}\lhpb{\lx(\u)}=\emptyset$ and for all pairs \u\ and \u', $\lhpb{\lx(\u)}\cap\lhpb{\lx(\u')}\neq\emptyset$, \Ta\ must have at least three vertices. We first show that there exists a subset of three vertices \ua, \ub, and \uc\ such that $\lhpb{\lx(\ua)}\cap\lhpb{\lx(\ub)}\cap\lhpb{\lx(\uc)}=\emptyset$.

Since $\Ta\cap\lhpb{\lx(\u)}=\u$, for every vertex \u\ of \Ta, not only do $\lhpb{\lx(\u)}\cap\lhpb{\lx(\u')}\neq\emptyset$ but the segments $\lx(\u)$ and $\lx(\u')$ intersect in a point, for every vertex pair \u\ and \u'\!.  Pick \ua\ to be any vertex of \Ta\ and assume without loss of generality that $\lx(\ua)$ is horizontal and directed towards the right; see Figure~\ref{crossingfig}(d). Initialize the set \LL\ to be $\LL:=\{\ua\}$ and sort the remaining vertices \u\ on the angle the corresponding cut $\lx(\u)$ makes with $\lx(\ua)$, from 0 to $2\pi$. Now, add vertices \u\ (with multiplicity), one by one, according to the sorted order to \LL, for $\u\in\UU$ until $\bigcap_{\u\in\LL}\lhpb{\lx(\u)}=\emptyset$. Let \ub\ denote the  last vertex added to \LL\ during the process above, and let \uc\ be a vertex in \LL\ such that the intersection $\qa=\lx(\uc)\cap\lx(\ua)$ lies after $\qb=\lx(\uc)\cap\lx(\ub)$ on the  directed cut \lx(\uc). To see that $\lhpb{\lx(\ua)}\cap\lhpb{\lx(\ub)}\cap\lhpb{\lx(\uc)}=\emptyset$, any point in the intersection $\lhpb{\lx(\ua)}\cap\lhpb{\lx(\uc)}$ lies above (or on)  \lx(\ua) and any point in the intersection $\lhpb{\lx(\ub)}\cap\lhpb{\lx(\uc)}$ lies below \lx(\ua), since \qb\ lies below~\lx(\ua); see Figure~\ref{crossingfig}(d). This also shows that \uc\ must exist, since if, for all vertices $\u\in\LL\!\setminus\!\{\ua,\ub\}$, the point $\lx(\u)\cap\lx(\ua)$ lies before $\lx(\u)\cap\lx(\ub)$ on $\lx(\u)$, the first such intersection point on $\lx(\ub)$ along $\lx(\ub)$ lies to the left (or on) each cut $\lx(\u)$, for $\u\in\LL$, contradicting that $\bigcap_{\u\in\LL}\lhpb{\lx(\u)}=\emptyset$; see Figure~\ref{crossingfig}(d).

Since $\lhpb{\lx(\ua)}\cap\lhpb{\lx(\ub)}\cap\lhpb{\lx(\uc)}=\emptyset$, it also follows that $\lhpb{\ex(\ua)}\cap\lhpb{\ex(\ub)}\cap\lhpb{\ex(\uc)}=\emptyset$, as $\Big(\lhpb{\ex(\u)}\!\setminus\!\Peu\Big)\subset\lhpb{\lx(\u)}$, for every vertex \u\ of \Ta.

The tour \Ta\ has no points in any of the regions \lhpb{\ex(\ua)}, \lhpb{\ex(\ub)}, or \lhpb{\ex(\uc)}, so \Tb\ must intersect each of them. Assume that \Tb\ does not intersect $\lx(\ub)$ or $\lx(\uc)$, otherwise at least one of $\p(\ub)$ or $\p(\uc)$ is white for \Tb, giving us a contradiction. However, then \Tb\ must have interior points in $\lhpb{\lx(\ub)}\cap\lhpb{\lx(\uc)}$ since \Tb\ intersects \lhpb{\ex(\ub)} and \lhpb{\ex(\uc)} but does not intersect $\lx(\ub)$ or $\lx(\uc)$. Hence, \Tb\ has points below $\lx(\ua)$ and since \lhpb{\ex(\ua)} lies above (or on) $\lx(\ua)$, \Tb\ must intersect $\lx(\ua)$ and thus $\p(\ua)$ is white for \Tb, again giving us a contradiction; see Figure~\ref{crossingfig}(d).
\end{description}
\end{description}
This concludes the proof.
\end{proof}

\section{Tentacles and Jellyfish}\label{sec:jellyfish}

We will use the following definitions extensively in the sequel.
\begin{definition}
For a point \q\ in \P\ and a point \rr\ on the boundary \dP\!, we call the shortest path from \q\ to some point in \P\ that sees \rr, a {\em tentacle\/} from \q\ to \rr, denoted~\ZT(\q,\rr). We say that \q\ is the {\em head\/} of the tentacle and that a tentacle is {\em attached\/} to its head. The other endpoint of the tentacle is called the~{\em tip}.
\end{definition}

With a tentacle \ZT(\q,\rr) (where \q\ does not see \rr) we also associate a {\em tentacle cut} \cut{\ZT(\q,\rr)}. Consider the maximal line segment $l$ in \P\ passing through the tip \p\ of \ZT(\q,\rr) and the boundary point \rr. If the segment $l$ has endpoints \rr\ and \rr'\ and (possibly) subdivides into connected pieces $l_1,l_2,\ldots$ intersecting the boundary only at the end points and where $l_i$ partitions \P\ into two subpolygons, one containing \q\ and one containing \rr. The cut \cut{\ZT(\q,\rr)} is the segment $l_i$ directed so that $\q\in\P\setminus\lhpb{\cut{\ZT(\q,\rr)}}$; see Figure~\ref{tentacle}(a). The first boundary vertex \u\ encountered as you move from \rr\ along $l$ towards \p\ is the {\em hiding vertex\/} of the tentacle~\ZT(\q,\rr).

We can prove the following technical lemma.
\begin{lemma}\label{lem:tentacle}
If \q\ is a point in \P\ and \sbe\ is a subsegment of a boundary edge \be\ of \dP\ having endpoints \ra\ and \rb, then the two tentacles \ZT(\q,\ra) and \ZT(\q,\rb) together see the whole subsegment \sbe. 
\end{lemma}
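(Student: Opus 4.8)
The plan is to reduce the statement to a no-gap property along the edge \be\ and then to establish that property by contradiction using the shortest-path nature of tentacles. First I would record the elementary visibility fact that if a single point \p\ of \P\ sees two points of the edge \be, then it sees the whole subsegment between them: the triangle spanned by \p\ and those two points has two sides inside \P\ and its third side on \dP, hence lies entirely in \P. By definition the tip \pa\ of \ZT(\q,\ra) sees \ra\ and the tip \pb\ of \ZT(\q,\rb) sees \rb\ (and if \q\ already sees an endpoint, the corresponding tentacle degenerates to the single point \q, which is harmless). A tentacle is a simple path, so between any two of its points there is a simple subpath; consequently the argument of Lemma~\ref{lem:edgeconnected}, which uses only this property of the curve, applies verbatim and shows that $\VP{\ZT(\q,\ra)}\cap\be$ and $\VP{\ZT(\q,\rb)}\cap\be$ are connected. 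Hence $A\defeq\VP{\ZT(\q,\ra)}\cap\sbe$ and $B\defeq\VP{\ZT(\q,\rb)}\cap\sbe$ are subsegments of \sbe\ with $\ra\in A$ and $\rb\in B$. It therefore suffices to prove that $A$ and $B$ leave no gap, i.e.\ that every point of \sbe\ lies in $A\cup B$.

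For the no-gap step I would argue by contradiction, assuming some \rs\ on \sbe\ is seen by neither tentacle. Since $A$ and $B$ are subsegments containing the two endpoints and $\rs\notin A\cup B$, the point \rs\ lies strictly between $A$ and $B$ on \be, and in particular is distinct from \ra\ and \rb. I would then consider the shortest path \SP(\q,\rs) and its tip \p, the first point of this path that sees \rs; the line of sight $[\p,\rs]$ grazes a reflex hiding vertex \v\ of \P\ lying between \p\ and \rs. Because \ra, \rs\ and \rb\ are collinear on the single edge \be, the occluder \v\ can block \be\ on only one side of \rs, so from \p\ the visible portion of \be\ extends past \rs\ toward \ra\ or toward \rb. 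The intended contradiction is to promote this into visibility of \rs\ from the corresponding tentacle: if the visible stretch reaches \ra, then \p\ sees both \ra\ and \rs, hence all of $[\ra,\rs]$; and since \ZT(\q,\ra) is the shortest subpath of a geodesic from \q\ entering $\VP{\ra}$, while the geodesics from \q\ to the collinear boundary points \ra\ and \rs\ share a common prefix, the entry point \pa\ of the \ra-geodesic into $\VP{\ra}$ occurs no later than \p\ along their shared structure, forcing a point of \ZT(\q,\ra) to see \rs\ and contradicting $\rs\notin A$. The symmetric situation reaching \rb\ contradicts $\rs\notin B$.

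I expect the main obstacle to be exactly this last transfer, which hides two nontrivial points. The first is upgrading ``extends toward \ra\ or \rb'' to ``actually reaches \ra\ or \rb'': a priori a second reflex vertex could occlude \be\ before the visible stretch from \p\ attains an endpoint, and ruling this out requires using that \p\ is a first-seeing tip (so its sight cone toward \be\ is as wide as the geometry permits) together with minimality of the shortest path. The second is the ``pull-back'' onto the tentacle, namely showing that a point seeing both \ra\ and \rs\ forces the tip \pa\ (or a point of \ZT(\q,\ra)) itself to see \rs; this rests on the nesting of shortest paths to collinear boundary points and on \ZT(\q,\ra) being the geodesically closest point of $\VP{\ra}$ to \q. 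I would carry out this bookkeeping by tracking the tip \p\ as the target moves continuously from \ra\ to \rb\ along \sbe: the tip and its hiding vertex vary continuously and the hiding vertex ranges over the finitely many reflex vertices of \P, so a gap in $A\cup B$ would correspond to a switch of hiding vertex that, by the minimality of the two tentacles, cannot open an unguarded interval without letting one tentacle be shortened or letting its tip cross the sight line to the opposite endpoint.
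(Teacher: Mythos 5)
Your reduction of the lemma to a ``no-gap'' property is legitimate (and transferring the Lemma~\ref{lem:edgeconnected} argument to a tentacle, which is a simple path, is fine), but the no-gap step itself---which is where all the content of the lemma lives---is not proved, and the route you sketch for it rests on an intermediate claim that is false in general. From the tip \p\ of \ZT(\q,\rs) the visible portion of \be\ is merely some interval containing \rs; there is no reason it should reach \ra\ or \rb. A comb-shaped neighbourhood of \be, with several teeth hanging over the edge between \rs\ and either endpoint, blocks the view from \p\ long before an endpoint of \sbe\ is reached, and neither the minimality of \SP(\q,\rs) nor the fact that \p\ is the first point of that path seeing \rs\ widens that interval. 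The subsequent ``pull-back'' is also unjustified: even if some point saw both \ra\ and \rs, the tip \pa\ of \ZT(\q,\ra) is only the geodesically nearest point of \VP{\ra} to \q\ and may see \ra\ through a narrow slit that admits no view of \rs; nothing forces a point of \ZT(\q,\ra) to see \rs. Your closing fallback---tracking the tip and hiding vertex of \ZT(\q,\rr) as \rr\ sweeps \sbe\ and arguing that a hiding-vertex switch cannot open an unseen interval---is not an argument as stated: the tentacles \ZT(\q,\rr) for \rr\ in a putative gap need not pass through any point of \ZT(\q,\ra) or \ZT(\q,\rb), so their existence says nothing about what those two fixed tentacles see.

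The paper closes this gap by a global topological argument that sidesteps the pointwise analysis entirely: the two tentacles share the head \q, so together with the segments $[\pa,\ra]$ and $[\rb,\pb]$ and the subsegment \sbe\ they bound a closed simple polygon \RR\ (or, when $[\pa,\ra]$ and $[\pb,\rb]$ cross at a point \p, a polygon \Rp\ with \p\ as a vertex); every boundary edge of that polygon other than the three edges adjacent to \ra\ and \rb\ lies on a tentacle, so a segment extended from any \rr\ in \sbe\ into the region must meet a tentacle, which therefore sees \rr. The connectedness of the two tentacles through the common head \q\ is exactly what your argument fails to exploit; if you wish to keep your decomposition, the missing piece is precisely such an enclosure argument for a point \rs\ in the alleged gap.
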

\begin{proof}
The proof is a simple modification of the proof of Lemma~\ref{lem:edgeconnected}. 
Let \pa\ and \pb\ be the two tips of \ZT(\q,\ra) and \ZT(\q,\rb), respectively. Let \rr\ be some arbitrary point in \sbe. If the segments $[\pa,\ra]$ and $[\pb,\rb]$ do not intersect, we construct a closed simple polygon \RR\ in \P\ as follows: follow \sbe\ from \ra\ to \rb, from \rb\ to \pb, from \pb\ to \q\ along the tentacle \ZT(\q,\rb), from \q\ to \pa\ along \ZT(\q,\ra), and finally from \pa\ to \ra. The only edges of the constructed polygon \RR\ that are not part of the two tentacles are the three edges adjacent to \ra\ and \rb\ so \RR\ is completely seen by the two tentacles. Hence, the point \rr\ is also seen since it lies on \sbe\ between \ra\ and \rb.

On the other hand if the segments $[\pa,\ra]$ and $[\pb,\rb]$ intersect, let \p\ be an intersection point between $[\pa,\ra]$ and $[\pb,\rb]$. The three points \ra, \rb, and \p\ form a triangle interior to \P\ and we construct a closed simple polygon \Rp\ in \P\ having \p\ as a vertex as follows: follow the tentacles from \ra\ to \rb\ via \q, a straight edge from \rb\ to \p, and a straight edge from \p\ to \ra. If we extend the segment $[\rr,\p]$ inside \Rp, it must intersect a tentacle, since \ra\ connects to \rb\ in \Rp\ using the tentacles. Hence, the point \rr\ is also seen from the tentacles.
\end{proof}
Let $s$ be a line segment in \P\ and let $\be=[\v,\v']$ be some boundary edge of \dP. The next lemma establishes that if we move the head of the tentacle \ZT(\q,\rr), from $\q\in s$, a small distance to the point $\q'\in s$ and the point $\rr\in]\v,\v'[$ to the point $\rr'\in]\v,\v'[$ also a small distance, the difference in length between the two tentacles $\|\ZT(\q,\rr)\|-\|\ZT(\q',\rr')\|$ is a smooth function.%
\begin{figure*} 
			    \begin{center} \small
			    \input{\figdirtentacle.pdf_t}

			    \parcaption{\capw}{\label{tentacle}(a)~Illustrating the tentacle \protect\ZT(\protect\q,\protect\rr)\!, its hiding vertex, and its associated cut. (b)~The difference between tentacle and edge restricted tentacle. Vertex \protect\u\ is the hiding vertex for \protect\ZT(\protect\q,\protect\v) and \protect\v\ is the hiding vertex for~\protect\ZR(\protect\q,\protect\v,\protect\be).}
			    \end{center}
			    \end{figure*}
\begin{lemma}\label{lem:tentaclemotion}
Let \q\ move a distance $\delta$ to \q'\ on a line segment $s$ and let \rr\ move a distance $\epsilon$ to \rr', where both \rr\ and \rr'\ lie in the open interval $]\v,\v'[$ of a boundary edge $\be=[\v,\v']$, in such a way that the first segment of the tentacles from \q\ and \q'\ intersect the same reflex vertex, if the tentacle consists of multiple segments, and \cut{\ZT(\q,\rr)} and \cut{\ZT(\q',\rr')} have the same hiding vertex, then
$$
\len{\ZT(\q',\rr')}  
=
\len{\ZT(\q,\rr)} +
{\cal F}(\delta,\epsilon),
$$
such that 
\begin{align*}
{\cal F}(\delta,\epsilon) & = 
- F_0 + \sqrt{F_0^2 + F_1\delta + F_2\delta^2}
- F_3 + \frac{F_3+F_4\epsilon+F_5\delta+F_6\epsilon\delta+F_7\epsilon^2+F_8\epsilon^2\delta}{\sqrt{1+F_9\epsilon+F_{10}\epsilon^2+F_{11}\epsilon^3+F_{12}\epsilon^4}}
\nonumber\\
& \qquad
- F_{13} + \sqrt{\frac{F_{13}^2+F_{14}\epsilon+F_{15}\delta+F_{16}\epsilon\delta+F_{17}\epsilon^2+F_{18}\delta^2+F_{19}\epsilon^2\delta+F_{20}\epsilon\delta^2+F_{21}\epsilon^2\delta^2}{1+F_{22}\epsilon+F_{23}\epsilon^2}},
\end{align*}
where $F_0,\dots,F_{23}$ are constants.
\end{lemma}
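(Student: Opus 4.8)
The plan is to reduce the computation to elementary coordinate geometry once the combinatorial type of the tentacle has been frozen by the two hypotheses. Under the assumption that the first segment of \ZT(\q,\rr) and of \ZT(\q',\rr') meet the same reflex vertex, the entire turning sequence of the geodesic is the same for both tentacles: writing the tentacle as $\q\to v_1\to\cdots\to v_k\to\p$, where $v_1,\dots,v_k$ are the reflex vertices at which the shortest path turns and \p\ is the tip, the vertices $v_1,\dots,v_k$ are fixed and independent of the perturbation. Likewise, the hypothesis that \cut{\ZT(\q,\rr)} and \cut{\ZT(\q',\rr')} share a hiding vertex fixes the vertex \u\ through which the maximal segment $l$ passes. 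Consequently only the first segment $[\q,v_1]$ and the last segment $[v_k,\p]$ can change; every intermediate segment has both endpoints fixed and therefore cancels in the difference $\len{\ZT(\q',\rr')}-\len{\ZT(\q,\rr)}$.

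First I would parametrize the two motions linearly, $\q(\delta)=\q+\delta\hat{s}$ along $s$ and $\rr(\epsilon)=\rr+\epsilon\hat{b}$ along \be, with $\hat{s}$ and $\hat{b}$ unit vectors. Expanding $|\q+\delta\hat{s}-v_1|^2$ shows that the first segment contributes $\len{[\q(\delta),v_1]}=\sqrt{F_0^2+F_1\delta+F_2\delta^2}$ with $F_0=|\q-v_1|$, so that after subtracting its value at $\delta=0$ we obtain exactly the term $-F_0+\sqrt{F_0^2+F_1\delta+F_2\delta^2}$. When the tentacle is a single segment, $v_1$ is replaced by the tip and this contribution is absorbed into the analysis of the last segment below.

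The heart of the argument is the last segment. As \rr\ slides, the window carrying \p\ is the line $L(\epsilon)$ through the fixed vertex \u\ with direction $\hat{d}(\epsilon)=(\rr(\epsilon)-\u)/|\rr(\epsilon)-\u|$, so the window merely rotates about \u. The tip \p\ is the point of $L(\epsilon)$ nearest, in the frozen geodesic, to the last turn $v_k$, hence the orthogonal projection $\p(\epsilon)=\u+\big(h(\epsilon)/g(\epsilon)\big)(\rr(\epsilon)-\u)$, where $g(\epsilon)=|\rr(\epsilon)-\u|^2$ is quadratic in $\epsilon$ and $h(\epsilon)=(v_k-\u)\cdot(\rr(\epsilon)-\u)$ is linear (bilinear in $\delta,\epsilon$ when the projected point is $\q(\delta)$ in the single-segment case). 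Writing the last segment as the perpendicular distance $|(v_k-\u)\times(\rr(\epsilon)-\u)|/\sqrt{g(\epsilon)}$, the numerator is linear (or bilinear) in the parameters over $\sqrt{g(\epsilon)}$; dividing numerator and denominator by $|\rr-\u|$ so that the radicand is $1$ at $\epsilon=0$ yields precisely the middle term of $\mathcal{F}$, the mixed $\epsilon\delta$ monomials arising from the bilinear cross term between the linear motion of the projected point and the rotation of $L(\epsilon)$. Writing the same segment instead as the distance to the moving tip $\p(\epsilon)$ gives a squared length equal to a ratio of polynomials: after clearing $g(\epsilon)$, the $h^2/g$, $h\cdot(\,\cdot\,)$ and $g\,|\cdot|^2$ pieces produce a numerator of degree two in $\epsilon$ and degree two in $\delta$ (the monomials up to $\epsilon^2\delta^2$) over the quadratic $g(\epsilon)$, so that taking the square root and normalizing gives the third term of $\mathcal{F}$. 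Summing the first-segment term with the last-segment contribution and cancelling the constant middle establishes the stated closed form, the constants $F_0,\dots,F_{23}$ being read off from $\q,\rr,\u,v_k,\hat{s},\hat{b}$, with many of them vanishing in any particular combinatorial type.

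The step I expect to be the main obstacle is this final substitution and simplification. One must verify that projecting onto the rotating line $L(\epsilon)$, whose unit direction carries the normalizing factor $\sqrt{g(\epsilon)}$, propagates through the two equivalent ways of writing the last segment so as to produce exactly the claimed pattern of monomials and no others; in particular that the first radical carries a degree-four denominator $1+F_9\epsilon+\cdots+F_{12}\epsilon^4$ and the second a ratio of quadratics, and that the several geometric cases, namely the tip lying in the interior of the window versus at an endpoint, and single- versus multi-segment tentacles, all collapse into the one master expression.
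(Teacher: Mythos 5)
Your overall strategy coincides with the paper's: the two hypotheses freeze the combinatorial type of the geodesic, every intermediate segment has both endpoints fixed and cancels in the difference, and only the first segment (a function of $\delta$ alone in the multi-segment case) and the last segment (a function of $\epsilon$, and of both parameters when the tentacle is a single segment) need to be computed by elementary trigonometry. Your first-segment term is exactly the paper's, and your treatment of the case in which the tip is the free foot of a perpendicular onto the rotating line through the hiding vertex is correct; your ``bilinear over the square root of a quadratic'' expression is a reduced special case of the paper's middle term (with $F_7,F_8,F_{11},F_{12}$ vanishing), which is all the lemma requires.

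The gap is the case in which the tip of the tentacle lies on \dP. There the tip is \emph{not} the orthogonal projection of the last turning vertex onto the rotating line: that foot would fall outside the polygon, and the tip is instead the endpoint of the cut, i.e.\ the intersection of the rotating line with a \emph{fixed} boundary edge. Under the rotation about the hiding vertex this intersection point moves along that edge by $\epsilon'=a\epsilon/(1-c\epsilon)$ (a perspectivity between two lines, which the paper extracts from two pairs of similar triangles), and the last-segment length is then obtained from the cosine rule in $\epsilon'$; substituting the M\"obius expression for $\epsilon'$ is precisely what produces the third term's radicand with the quadratic denominator $1+F_{22}\epsilon+F_{23}\epsilon^2=(1-c\epsilon)^2$ and, once the motion of \q\ is folded in for single-segment tentacles, the $\delta^2$, $\epsilon\delta^2$ and $\epsilon^2\delta^2$ monomials. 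Your proposed derivation of the third term --- the distance to the moving projection foot --- is algebraically identical to the perpendicular distance and therefore only reproduces the middle term again; the clamped-tip case, which is the one that actually generates the third term, is never derived. You correctly flag this ``collapse of cases'' as the main obstacle, but it is not mere bookkeeping: it requires a separate geometric analysis of the boundary-touching tip via the perspectivity and the cosine theorem, as carried out in the appendix's two ``touching'' subcases.
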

The proof of the lemma is a lengthy case analysis where all arguments are based on similarity and the cosine theorem and is therefore deferred to Appendix~\ref{app:tentaclemotion}.

To alleviate the fact that Lemma~\ref{lem:tentaclemotion} only holds for points \rr\ in the interior of boundary edges we define the {\em edge restricted tentacle\/} \ZR(\q,\rr,\be) to be
\begin{equation}
\ZR(\q,\rr,\be) = \!\!\!\!\lim_{]\v,\v'[\ni\p\rightarrow\rr} \!\ZT(\q,\p)\!, \mbox{\ where $\be=[\v,\v']$.}
\end{equation}
An edge restricted tentacle \ZR(\q,\rr,\be) can differ from \ZT(\q,\rr) only when \rr\ is a vertex of \be. If \v\ is a reflex vertex and \q\ lies to the right of extension \e\ collinear to boundary edge \be, then \ZR(\q,\v,\be) is the shortest path from \q\ that sees \be, not just \v; see Figure~\ref{tentacle}(b). In all other cases, $\ZR(\q,\rr,\be)=\ZT(\q,\rr)$. The proof of Lemma~\ref{lem:tentacle} does not make use of the edge restriction of a tentacle \ZR(\q,\rr,\be) so it still holds for edge restricted tentacles. We can generalize Lemma~\ref{lem:tentaclemotion} to also hold for vertices using edge restricted tentacles. We claim this as a corollary.
\begin{corlemma}\label{cor:tentaclemotion}
Let \q\ move a distance $\delta$ to \q'\ on a line segment $s$ and let \rr\ move a distance $\epsilon$ to \rr'\ along a boundary edge $\be=[\v,\v']$, in such a way that the first segment of the tentacles from \q\ and \q'\ intersect the same reflex vertex, if the tentacle consists of multiple segments, and \cutb{\ZR(\q,\rr,\be)} and \cutb{\ZR(\q',\rr',\be)} have the same hiding vertex, then
$$
\len{\ZR(\q',\rr',\be)}  
=
\len{\ZR(\q,\rr,\be)} +
{\cal F}(\delta,\epsilon),
$$
such that 
\begin{align*}
{\cal F}(\delta,\epsilon) & = 
- F_0 + \sqrt{F_0^2 + F_1\delta + F_2\delta^2}
- F_3 + \frac{F_3+F_4\epsilon+F_5\delta+F_6\epsilon\delta+F_7\epsilon^2+F_8\epsilon^2\delta}{\sqrt{1+F_9\epsilon+F_{10}\epsilon^2+F_{11}\epsilon^3+F_{12}\epsilon^4}}
\nonumber\\
& \qquad
- F_{13} + \sqrt{\frac{F_{13}^2+F_{14}\epsilon+F_{15}\delta+F_{16}\epsilon\delta+F_{17}\epsilon^2+F_{18}\delta^2+F_{19}\epsilon^2\delta+F_{20}\epsilon\delta^2+F_{21}\epsilon^2\delta^2}{1+F_{22}\epsilon+F_{23}\epsilon^2}},
\end{align*}
where $F_0,\dots,F_{23}$ are constants.
\end{corlemma}
We will henceforth only work with edge restricted tentacles and just call them tentacles.

Given two points \q\ and \q'\ in \P, consider the tentacles \ZR(\q,\v,\be) and \ZR(\q',\v,\be), for each boundary edge \be\ and each vertex \v\ of \be. We define two sets \JFs(\q) and \JFs(\q') of tentacles such that $\ZR(\q,\v,\be)\in\JFs(\q)$ iff $\|\ZR(\q,\v,\be)\|\leq\|\ZR(\q',\v,\be)\|$ and $\ZR(\q',\v,\be)\in\JFs(\q')$ iff $\|\ZR(\q,\v,\be)\|>\|\ZR(\q',\v,\be)\|$. In this way, each end point \v\ of each boundary edge \be\ has exactly one tentacle in one of \JFs(\q) or~\JFs(\q').

From Lemma~\ref{lem:tentacle} it is clear that if both endpoints of a boundary edge $\be=[\v,\v']$ have tentacles in the same set, either \JFs(\q) or \JFs(\q'), then the tentacles in the set sees the whole edge \be. However, assume that $\ZR(\q,\v,\be)\in\JFs(\q)$ and $\ZR(\q',\v',\be)\in\JFs(\q')$, then $\|\ZR(\q,\v,\be)\|\leq\|\ZR(\q',\v,\be)\|$ and $\|\ZR(\q,\v',\be)\|>\|\ZR(\q',\v',\be)\|$. Since the length of a tentacle $\ZR(\q,\rr,\be)$ changes smoothly as \rr\ moves along \be\ from \v\ to \v'; see Corollary~\ref{cor:tentaclemotion}, there is some point \rs\ on \be\ such that $\|\ZR(\q,\rs,\be)\|=\|\ZR(\q',\rs,\be)\|$. If we include \ZR(\q,\rs,\be) into \JFs(\q) and \ZR(\q',\rs,\be) into \JFs(\q')\!, this guarantees that the tentacles in \JFs(\q) and \JFs(\q') together see \be.

Thus, if each edge \be\ either has the two tentacles of its endpoints in one set or there is a point \rs\ on \be\ such that $\ZR(\q,\rs,\be)\in\JFs(\q)$ and $\ZR(\q',\rs,\be)\in\JFs(\q')$, then the whole boundary is seen by the tentacles in the set. By considering the tour constructed by following each tentacle from the head to the tip and back in some order for each set \JFs(\q) and \JFs(\q'), by Lemma~\ref{lem:boundary}, the polygon \P\ is guarded by the tentacles in the two sets. We call each of the two sets \JFs(\q) and \JFs(\q') a {\em jellyfish\/} with {\em head\/} \q\ and \q', respectively, and $\JF(\q,\q')=\JFs(\q)\cup\JFs(\q')$ the {\em jellyfish pair\/} with heads \q\ and~\q'. We define \len{\JF(\q,\q')}, the {\em length\/} of a jellyfish pair, to be the length of its longest tentacle; see Figure~\ref{jellyfish}(a) and~(b).
\begin{figure*} 
			    \begin{center} \small
			    \input{\figdirjellyfish.pdf_t}

			    \parcaption{\capw}{\label{jellyfish}(a)~A pair of tentacles \protect\ZR(\protect\q,\protect\rr,\protect\be) and \protect\ZR(\protect\q',\protect\rr,\protect\be) in blue, %
(b)~a jellyfish pair~\protect\JF(\protect\q,\protect\q') in blue with jellyfish \protect\JFs(\protect\q) and \protect\JFs(\protect\q'), 
(c)~a minimum jellyfish pair~\protect\MJF(\protect\ea,\protect\eb) with minimum jellyfish~\protect\MJFs(\protect\ea) and~\protect\MJFs(\protect\eb) in blue.}
			    \end{center}
			    \end{figure*}

%

Let \ea\ and \eb\ be two extensions intersected by \Ta\ and \Tb\ respectively. These extensions exist by Lemma~\ref{lem:crossing} and we have the following lemma.
\begin{lemma}\label{lem:tentaclepair}
If \ua\ and \ub\ are intersection points of \Ta\ and \Tb\ with extensions \ea\ and \eb\ respectively, then $\len{\JF(\ua,\ub,)} \leq \maxlen{(\Ta,\Tb)}/2$. 
\end{lemma}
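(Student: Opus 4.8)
The plan is to reduce the claimed inequality to a single bound that holds for every tentacle of the pair, and then to exploit that the two heads $\ua$ and $\ub$ lie on the tours $\Ta$ and $\Tb$. By definition, $\len{\JF(\ua,\ub)}$ is the length of the longest tentacle occurring in the pair, and for each boundary target $\rr$ on an edge $\be$ (a vertex $\v$ of $\be$, or a split point $\rs$) the pair retains the shorter of the two candidate tentacles $\ZR(\ua,\rr,\be)$ and $\ZR(\ub,\rr,\be)$. Hence it suffices to prove that for every such target,
\[
\min\bigl\{\len{\ZR(\ua,\rr,\be)},\,\len{\ZR(\ub,\rr,\be)}\bigr\}\ \leq\ \maxlen{(\Ta,\Tb)}/2,
\]
and then to take the maximum over all tentacles of the pair.

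The crucial ingredient is a routing bound: \emph{if a boundary point $\rr$ is seen by the tour $\Ti$, then $\len{\ZR(\ui,\rr,\be)}\leq\len{\Ti}/2$.} This holds because $\ui$ lies on $\Ti$ (it is an intersection point of $\Ti$ with $\ei$) and some point $\p$ of $\Ti$ sees $\rr$; following $\Ti$ from $\ui$ to $\p$ in the shorter of the two directions around the closed tour produces a path of length at most $\len{\Ti}/2$ that terminates at a point seeing $\rr$. Since $\ZR(\ui,\rr,\be)$ is by definition the shortest path from $\ui$ to a point that sees $\rr$, its length cannot exceed that of this path, so $\len{\ZR(\ui,\rr,\be)}\leq\len{\Ti}/2\leq\maxlen{(\Ta,\Tb)}/2$.

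To complete the argument I would fix a target $\rr$ on an edge $\be$ and use that $(\Ta,\Tb)$ is a two-watchman route, so $\Ta\cup\Tb$ sees all of $\dP$; in particular the part of $\be$ near $\rr$ that the tentacle must see is seen by $\Ta$ or by $\Tb$. Applying the routing bound to whichever tour $\Ti$ sees it shows that the candidate tentacle $\ZR(\ui,\rr,\be)$ from that tour's head has length at most $\maxlen{(\Ta,\Tb)}/2$, whence the minimum of the two candidates, i.e.\ the tentacle retained in the pair, is at most $\maxlen{(\Ta,\Tb)}/2$. Taking the maximum over all tentacles yields $\len{\JF(\ua,\ub)}\leq\maxlen{(\Ta,\Tb)}/2$.

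The step I expect to demand the most care is the edge-restricted nature of the tentacles: $\ZR(\ui,\v,\be)$ must see $\be$ in the limit as one approaches the vertex $\v$ from the interior of $\be$, not necessarily $\v$ itself. Here I would invoke Lemma~\ref{lem:edgeconnected}: since $\VP{\Ta}\cap\be$ and $\VP{\Tb}\cap\be$ are connected subsegments whose union is $\be$, a one-sided neighborhood of $\v$ inside $\be$ lies entirely in $\VP{\Ta}$ or in $\VP{\Tb}$, and applying the routing bound to each interior point $\p\rightarrow\v$ of that neighborhood and passing to the limit defining $\ZR(\ui,\v,\be)$ preserves the inequality. For a split point $\rs$ the same connectedness shows $\rs$ is seen by both tours, so the routing bound applies to it directly.
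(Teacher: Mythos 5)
Your proposal is correct and follows essentially the same route as the paper's proof: the half-tour routing bound (the shorter arc of \Ti\ from \ui\ to a point seeing the target has length at most $\len{\Ti}/2$, and the tentacle is a shortest such path), applied to whichever tour sees the target, combined with Lemma~\ref{lem:edgeconnected} and the limit definition of edge-restricted tentacles at vertices — the paper merely spells this out as five explicit cases. One small inaccuracy: a split point \rs\ need not be seen by both tours (it is defined by equality of the two tentacle lengths, not by the overlap of $\VP{\Ta}$ and $\VP{\Tb}$ on \be), but since \rs\ is seen by at least one tour and the two tentacles there have equal length, bounding one bounds both, so your conclusion stands.
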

\begin{proof}
Without loss of generality, assume that a longest tentacle in \JF(\ua,\ub,) is \ZR(\ua,\rr,\be), for some point \rr\ on boundary edge \be. 
We distinguish five cases.
\begin{enumerate}
\item\label{case:bothtours}%
If \rr\ is an interior point of \be\ seen by \Ta, then follow a path from \ua\ along \Ta\ until we reach the first point \ua'\ on \Ta\ that sees \rr. Denote the subpath of \Ta\ thus constructed $\Ta[\ua,\ua']$ and let the other subpath $(\Ta\!\setminus\!\Ta[\ua,\ua'])\cup\{\ua,\ua'\}$ (the closure of the path) be denoted $\Ta[\ua',\ua]$. The path $\Ta[\ua',\ua]$ also sees \rr\ since \ua'\ sees \rr. The tentacle \ZR(\ua,\rr,\be) is at most as long as the shorter of the subpaths $\Ta[\ua,\ua']$ and $\Ta[\ua',\ua]$, whereby $\len{\ZR(\ua,\rr,\be)}\leq\len{\Ta}/2$ in this case.
\item
If \rr\ is an interior point of of \be\ but is not seen by \Ta, then it is seen by \Tb, and since $\len{\ZR(\ua,\rr,\be)}\leq\len{\ZR(\ub,\rr,\be)}$ and by the argument in Case~\ref{case:bothtours}.\@ using \ub\ and \Tb\ instead of \ua\ and \Ta, giving us $\len{\ZR(\ub,\rr,\be)}\leq\len{\Tb}/2$, we have $\len{\ZR(\ua,\rr,\be)}\leq\len{\Tb}/2$ in this case.
\item\label{case:opposite}%
If $\rr=\v$ is a vertex of \be\ seen by \Ta\ and \Ta\ also sees other points of \be, then by Lemma~\ref{lem:edgeconnected}, \Ta\ sees some connected set of \be\ that includes \v. Hence $\lim_{\be\ni\p\rightarrow\v}\|\ZT(\ua,\p)\|=\|\ZR(\ua,\v,\be)\|$ by the definition of edge restricted tentacle and using our argument from Case~\ref{case:bothtours} (including the fact that \Ta\ sees all points \p\ on \be\ in a connected neighbourhood of \v), we have $\len{\ZR(\ua,\v,\be)}\leq\len{\Ta}/2$ in this case.
\item
If $\rr=\v$ is a vertex of \be\ seen by \Ta\ but \Ta\ sees no other points of \be, then \Tb\ must see \v\ and other points of \be, otherwise there are points on \be\ not seen by any of \Ta\ and \Tb. Using the same argument as in Case~\ref{case:opposite} for \ub\ and \Tb, we have $\len{\ZR(\ub,\v,\be)}\leq\len{\Tb}/2$. Since $\len{\ZR(\ua,\v,\be)}\leq\len{\ZR(\ub,\v,\be)}$ we have $\len{\ZR(\ua,\v,\be)}\leq\len{\Tb}/2$ in this case.
\item
If $\rr=\v$ is a vertex of \be\ not seen by \Ta, then \be\ contains some connected set that includes \v\ and that is not seen by \Ta. This set must be seen by \Tb\ and we can use the argument in the previous step to establish that $\len{\ZR(\ua,\v,\be)}\leq\len{\Tb}/2$ also in this case.
\end{enumerate}
This concludes the proof.
\end{proof}%


Given two segment $s$ and $s'$ in \P\!, we define the {\em bases\/} along $s$ and $s'$ to be a pair of points having the property $\qs,\qsp=\argmin_{\q\in s,\q'\in s'}\big\{\len{\JF(\q,\q')}\big\}$, i.e., two points, \qs\ on $s$ and \qsp\ on $s'$, where \len{\JF(\qs,\qsp)} is minimal. We denote the jellyfish pair \JF(\qs,\qsp) by~\MJF(s,s')\!, the {\em minimum jellyfish pair}; see Figures~\ref{jellyfish}(b) and~(c).

From this definition and Lemma~\ref{lem:tentaclepair}, we have that
\begin{equation}\label{math:jellyfish}
\len{\MJF(\ea,\eb)} \leq \len{\JF(\ua,\ub)} \leq \maxlen{(\Ta,\Tb)}/2,
\end{equation}
where again, \ea\ and \eb\ are two extensions intersected by \Ta\ and \Tb, respectively.

Let \qa\ on \ea\ and \qb\ on \eb\ be the two bases of \MJF(\ea,\eb); see Figure~\ref{jellyfish}(c). We denote by \MJFs(\ea) the subset of tentacles having their heads at \qa\ and by \MJFs(\eb) the subset of tentacles having their heads at \qb. Each set of tentacles \MJFs(\ea) and \MJFs(\eb) is a {\em minimum jellyfish}. We call \qa\ the base of \MJFs(\ea) and \qb\ the base of \MJFs(\eb). Thus, $\MJF(\ea,\eb)=\MJFs(\ea)\cup\MJFs(\eb)$.

The main part of our algorithm, presented in the next section, computes, for every pair of extensions and every pair of boundary edges, the heads of the shortest tentacle pairs that see these edges. The heads are then used as potential bases and the length of the jellyfish pair with these heads as bases is computed. We keep the jellyfish pair with minimum length and its bases through the iteration and from the previous discussion we know that at the end of the iteration the final bases are~\qa\ and~\qb.


\section{The Algorithm}\label{sec:algorithm}

Our algorithm is illustrated in pseudo-code in Figure~\ref{alg:twr} and we show that it approximates a minmax two-watchman route by a factor of $\big(7\pi/6 + 3 - \sqrt{3} + \sqrt{5}\arcsin1/\sqrt{5}\big)$ and therefore by Inequality~(\ref{math:sumleq2max}) also a minsum two-watchman route by a factor twice as large.

The algorithm begins by running Belleville's algorithm~\cite{Bel:master,Bel:cover} to establish if the polygon is guardable by two point guards. If this is the case, it returns the two point guards computed by the algorithm. Note that if \P\ is two-guardable by point guards, our algorithm must obtain two such point guards to satisfy the approximation guarantee. Otherwise, it computes the set of extensions~\EE\ in $O(n\log n)$ time using a dynamic ray-shooting data structure~\cite{HerSur:rayshoot}, and initializes the solution to be a single shortest watchman route~\Ws~\cite{Tan:watchman,TanJia:touringpolygons} together with some arbitrary point in~\P. 

The rest of this section is devoted to showing how to implement Step~\ref{alg:twr:approx} of the algorithm. It consists of an iteration over all pairs of extensions. For each pair, we assume that each tour in a minmax two-watchman route intersects one of the extensions in the pair.
\begin{figure}
\begin{center}%
\fbox{\renewcommand{\algsize}{\small}
\begin{algorithm}{Two-Watchman-Route}
\INPUT{A simple polygon \P}
\OUTPUT{A two-watchman route $\WWt$ that sees~\P}
\STEP{\label{alg:twr:belleville}Run Belleville's algorithm to establish if the polygon is guardable by two point guards. If this is the case, return the two point guards computed by the algorithm}
\STEP{\label{alg:twr:extensions}Compute the set of extensions \EE\ in~\P}
\STEP{\label{alg:twr:watchman}Compute a shortest watchman route $\Ws$ in~\P}
\STEP{\label{alg:twr:init}Let $\WWSt\leftarrow(\Ws,{\rm arbitrary\ point\ in\ \P})$}
\STEP{\label{alg:twr:approx}%
\FOR{every pair of extensions $\ea,\eb\in\EE$, $\ea\neq\eb$}
\subSTEP{\label{alg:twr:approx:bases}
Establish the bases \qa\ and~\qb, and compute $\MJF(\ea,\eb)=\MJF(\qa,\qb)$
}
\subSTEP{\label{alg:twr:approx:reduce}Sort tentacles in \MJF(\ea,\eb) in decreasing order and remove covered tentacles $\rightarrow$ \MJR(\ea,\eb)}
\subSTEP{\label{alg:twr:approx:compute}Compute the two tours $\WWt=\big(\tour{\MJRs(\ea)},\tour{\MJRs(\eb)}\big)$}
\subSTEP{\label{alg:twr:approx:test}\shortIF{$\maxlen{\WWt}<\maxlen{\WWSt}$}{$\WWSt\leftarrow\WWt$}}
\ENDFOR
}
\STEP{\label{alg:twr:return}\RETURN{\WWSt}}
\end{algorithm}
}
\end{center}
\caption{\label{alg:twr}The Two-Watchman-Route algorithm.}
\end{figure}%


\subsection{Computing Tentacles and Bases}\label{sec:computation} 

The algorithm needs to find the two bases \qa\ on \ea\ and \qb\ on \eb. To this end, let \sei\ be the maximal line segments through \qi, orthogonal to \ei\ inside \P, for $i\in\{1,2\}$. The segment \sei\ partitions \P\ into two subpolygons \PL\ and \PR. The minimum jellyfish \MJFs(\ei) either has one tentacle that attains the length \len{\MJFs(\ei)} with its first segment orthogonal to \ei\ or at an endpoint of \ei, or it has two tentacles, one in \PL\ and the other in \PR, that attain this length. To prove this, note that if the single tentacle attaining the length \len{\MJFs(\ei)} does not have a first segment orthogonal to \ei, then by moving the head slightly along \ei, we can reduce the length of the tentacle and thereby the jellyfish. Similarly, if all tentacles attaining the maximal length are in the same subpolygon, say \PL, then we can move the head along \ei\ into \PL, again reducing the length of the jellyfish. In both cases, this contradicts that \MJFs(\ei) is a minimum jellyfish from a minimum jellyfish pair on~\ea\ and~\eb.
Thus, there are at most two longest tentacle pairs of \MJF(\ea,\eb), at least one pair of which attains the length \len{\MJF(\ea,\eb)}. 

We identify the different cases that can occur and design the data structures needed to compute the bases. These data structures are, for each reflex vertex in \P\ and each extension endpoint, the shortest path tree to every vertex in \P~\cite{GuiHerLevShaTar:visandshortpath}. The shortest path trees are augmented with the additional edges and leaves obtained by extending each tree edge until it intersects a boundary edge of \P\ without crossing any other tree edge. Also, for each augmented tree, a data structure is built, enabling us to find the common ancestor of any pair of nodes in the tree (vertices of \P) in constant time~\cite{HarTar:treeancestors}. These can be precomputed in linear time for each root vertex, i.e., in quadratic time in total.

\subsubsection*{Case~1}

If \MJF(\ea,\eb) has one unique longest tentacle, then we know from the previous discussion that it is a tentacle \ZR(\qei,\v,\be) for some vertex endpoint \v\ of some boundary edge \be\ and $i=1$ or $i=2$. The point \qei\ is the point on \ei\ that minimizes this distance and furthermore $\len{\ZR(\qei,\v,\be)}\leq\len{\ZR(\qet,\v,\be)}$. If \v\ is a reflex vertex, let \e\ be the extension associated to it and we have that $\ZR(\qei,\v,\be)=\SPb(\ei,\VP{\v}\cap\lhp{\e})$ and if \v\ is a convex vertex, $\ZR(\qei,\v,\be)=\SPb(\ei,\VP{\v})$. In both cases, \qei\ is the point of intersection between the shortest path and \ei. Given \ea\ and \eb\ we can, for each boundary edge \be\ and for each vertex end point \v, verify if \v\ is reflex or convex, compute \SPb(\ea,\VP{\v}\cap\lhp{\e}) and \SPb(\eb,\VP{\v}\cap\lhp{\e}) or \SPb(\ea,\VP{\v}) and \SPb(\eb,\VP{\v}), depending on the case, in linear time~\cite{ElgAvi:visibilitypolygon,GooOro:handbook,JoeSim:visibilitypolygon,Lee:visibilitypolygon,LeePre:shortpath,SacUrr:handbook}. We denote these tentacles \ZR(\ea,\v,\be) and \ZR(\eb,\v,\be) to indicate that the intersection points with \ea\ and \eb\ are not fixed given points. We select the shorter of the two, \ZR(\ei,\v,\be), identifying the intersection with the corresponding extension \ei, $i=1$ or $2$, keeping the \qei\ for which the tentacle is the longest. After iterating over all boundary edges, one potential base \qei\ remains on \ei\ and one \qet\ remains on~\et.
The whole process takes quadratic time and gives us the potential bases \qaa\ and~\qab\ on \ea\ and \eb, respectively.

\subsubsection*{Case~2}
If \MJF(\ea,\eb) has two longest tentacles, this can occur in three different ways. One possibility is that two tentacles \ZR(\ei,\v,\be) and \ZR(\ej,\v',\be'), $i\in\{1,2\}$, $j\in\{1,2\}$ and different boundary edges \be\ and \be', have exactly the same length. If $i=j$, the heads of \ZR(\ei,\v,\be) and \ZR(\ej,\v',\be') also coincide in this case, otherwise the base is found in Case~3. Such tentacles are discovered using the method described in the previous case.%
\begin{figure*} 
			    \begin{center} \small
			    \input{\figdirjellycomp.pdf_t}

			    \parcaption{\capw}{\label{jellycomp}Computing the bases of the minimum jellyfish pair.}
			    \end{center}
			    \end{figure*}

The second way is that two tentacles in \MJF(\ea,\eb) are \ZR(\ea,\rs,\be) and \ZR(\eb,\rs,\be), for some boundary edge \be, where \rs\ is a point on \be\ such that $\len{\ZR(\q_{\ea},\rs,\be)}=\len{\ZR(\q_{\eb},\rs,\be)}$, $\q_{\ea}$ and $\q_{\eb}$ being the points on \ea\ and \eb\ that make these tentacles as short as possible. We can, for each boundary edge $\be=[\v,\v']$, compute \ZR(\ea,\v,\be), \ZR(\eb,\v,\be), \ZR(\ea,\v',\be), and \ZR(\eb,\v',\be) as in Case~1. Now, if $\len{\ZR(\ea,\v,\be)}<\len{\ZR(\eb,\v,\be)}$ and $\len{\ZR(\ea,\v',\be)}>\len{\ZR(\eb,\v',\be)}$, or the inequalities are reversed, we let a point \rr\ slide along \be\ from \v\ to the other endpoint \v'; see Figure~\ref{jellycomp}(a). By Corollary~\ref{cor:tentaclemotion}, the lengths of \ZR(\ea,\rr,\be) and \ZR(\eb,\rr,\be) as \rr\ moves along \be\ are smooth functions (continuous and differentiable) except at positions denoted {\em event points\/} where any of the participating tentacles has one of the following properties:
\begin{enumerate}
\item
an interior point of the first edge of a tentacle intersects a vertex of~\P, or 
the first and second edge of the tentacle become collinear,
\item
an interior point of the last edge of a tentacle  intersects a vertex of~\P, or
the last and penultimate edge of the tentacle become collinear,
\item
the cut \cutb{\ZR(\ei,\rr,\be)}, $i\in\{1,2\}$, becomes intersects more than one vertex of~\P,
\item
the head or the tip of a tentacle either hits or leaves the boundary of~\P, and
\item
one tentacle goes from being shorter than the longer tentacle to becoming the longer tentacle, at which point the two tentacles have equal length.
\end{enumerate}
The positions on \be\ for the first three event point types can be obtained in linear time, if \ZR(\ei,\rr,\be) consists of at least two segments having as common endpoint the reflex boundary vertex \ui, by a traversal of the augmented shortest path tree to each vertex in \P\ from the root \ui, giving us a total of $O(n)$ such event points. If \ZR(\ei,\rr,\be) consists of just one segment, the positions for the first three event point types on \be\ can be obtained in linear time by a traversal of the augmented shortest path trees rooted at the two end points of~\ei.

The fourth type is obtained by finding the point at which the first edge of the tentacle hits either endpoint of \ei\ and where the last edge of the tentacle intersects \cutb{\ZR(\ei,\rr,\be)} orthogonally on the boundary which can happen only once for each tentacle in each interval between event points of the first three types. Finally, the last type of event points can occur only a constant number of times in each interval between event points of the first four types, since here each function is the sum of at most three square roots of rational polynomials of constant degree and two such functions can be equal in at most a constant number of points. The number of event points between \v\ and \v'\ on \be\ is at most linear and iterating over all possible boundary edges, takes quadratic time to solve this case. This gives us the potential bases \qba\ and~\qbb\ on \ea\ and \eb, respectively.

\subsubsection*{Case~3}
In the third possibility that \MJF(\ea,\eb) has two longest tentacles, the two tentacles are \ZR(\q,\va,\bea) and \ZR(\q,\vb,\beb) for two boundary edges \bea\ and \beb\ with vertex endpoints \va\ and \vb\ respectively and \q\ on \ei\ such that $\len{\ZR(\q,\va,\bea)}=\len{\ZR(\q,\vb,\beb)}$. We compute \ZR(\ei,\va,\bea) and \ZR(\ei,\vb,\beb) and let \qei\ and \qei'\ be the heads of \ZR(\ei,\va,\bea) and \ZR(\ei,\vb,\beb), respectively, on \ei. We let a point \q\ slide along \ei\ from \qei\ to \qei'. Again, by Corollary~\ref{cor:tentaclemotion}, the lengths of \ZR(\q,\va,\bea) and \ZR(\q,\vb,\beb) as \q\ moves along \ei\ are smooth functions (continuous and differentiable) except at the event points established in Case~2. At these event points, the structures of the tentacles are updated and we can test whether the two tentacles have equal length for some head point on \ei\ before the next event point. The number of event points between \qei\ and \qei'\ on \ei\ is at most linear. We perform these steps also on \et\ and take the shortest pair as the representative base for the pair $(\va,\vb)$ of vertices. Iterating over all possible pairs of boundary edges and their endpoints, the process takes cubic time in total and gives us the potential bases \qca\ and~\qcb\ on \ea\ and \eb, respectively.

\subsubsection*{Case~4}
If \MJF(\ea,\eb) has three longest tentacles, this can occur in two ways. The first possibility if some combination of three tentacles with common heads from the three previous cases exist, in which case these can be established using the previously described methods.

The second possibility that \MJF(\ea,\eb) has three longest tentacles occurs if there are two boundary edges $\bea=[\va,\vap]$ and $\beb=[\vb,\vbp]$\ such that $\len{\ZR(\qss,\rs,\bea)}=\len{\ZR(\et,\rs,\bea)}=\len{\ZR(\qss,\vb,\beb)}$, with \qss\ on \ei; see Figure~\ref{jellycomp}(b). We can establish the event points by a combination of the methods in Cases~2 and~3. Begin by finding the interval $[\qei,\qei']$ on \ei\ by computing the tentacles \ZR(\ei,\va,\bea) and \ZR(\ei,\vb,\beb) having the heads \qei\ and \qei', and then the event points generated by \ZR(\q,\v,\be) and \ZR(\q,\v',\be') in order as \q\ moves along \ei\ from \qei\ to \qei'. Subsequently, establish the event points generated by \ZR(\ei,\rr,\be) and \ZR(\et,\rr,\be) in order as \rr\ moves along \be. For each pair of intervals between event points along \ei\ and along \bea, we verify if we can establish a point \q\ in the interval on \ei\ and a point \rr\ in the interval on \bea\ such that \ZR(\q,\rr,\bea), \ZR(\q,\vb,\beb), and \ZR(\et,\rr,\bea) have equal length by solving the system of equalities given by the length functions of the three tentacles. Within each interval the length functions are smooth according to Corollary~\ref{cor:tentaclemotion} so this takes constant time. In the worst case, we have to consider a linear number of intervals each on \ei\ and \bea\ taking quadratic time. Iterating over all possible pairs of boundary edges and their endpoints, the computation takes quartic time in total and gives us the potential bases \qda\ and~\qdb\ on \ea\ and \eb, respectively.

\subsubsection*{Case~5}
If \MJF(\ea,\eb) has four longest tentacles, this can occur in two ways. Again, the first possibility is if some combination of four tentacles with common heads from the four previous cases exist, in which case these can be established using the previously described methods. 

The second possibility that \MJF(\ea,\eb) has four longest tentacles occurs if there are two boundary edges \bea\ and \beb\ such that $\len{\ZR(\qsa,\rsa,\bea)}=\len{\ZR(\qsb,\rsa,\bea)}=\len{\ZR(\qsa,\rsb,\beb)}=\len{\ZR(\qsb,\rsb,\beb)}$, with \qsa\ on \ea, \qsb\ on \eb, \rsa\ on \bea, and \rsb\ on \beb; see Figure~\ref{jellycomp}(c). We can establish the event points by extending the method in Case~4, giving a linear number of intervals on each segment \ea, \eb, \bea, and \beb, where the length functions of the tentacles are smooth according to Corollary~\ref{cor:tentaclemotion}. We verify if we can establish point \qa, \qb, \ra, and \rb\ in their respective interval such that \ZR(\qa,\ra,\bea), \ZR(\qb,\ra,\bea), \ZR(\qa,\rb,\beb), and \ZR(\qb,\rb,\beb) have equal length by solving the system of equalities given by the length functions of the four tentacles which takes constant time in each interval. In the worst case, we have to consider a linear number of intervals each on \ea, \eb, \bea, and \beb, thus taking quartic time. Iterating over all possible pairs of boundary edges and their endpoints, the computation takes $O(n^6)$ time in total and gives us the potential bases \qea\ and~\qeb\ on \ea\ and \eb, respectively.

\subsubsection*{Case~6}
If \MJF(\ea,\eb) has five or more longest tentacles, they must occur as some combination of tentacles structured as in the previous cases and they can therefore be obtained using the methods described above.

\bigskip\noindent
This gives us five pairs of potential bases, $\big(\qaa\!,\qab\big),\big(\qba\!,\qbb\big),\ldots,\big(\qea\!,\qeb\big)$, for which we can compute the jellyfish pairs $\JF(\qaa\!,\qab),\JF(\qba\!,\qbb),\ldots,\JF(\qea\!,\qeb)$, each in quadratic time, and from these we select the minimum one. By Corollary~\ref{cor:tentaclemotion} and the previous discussion, it follows that this jellyfish pair is a minimum jellyfish pair \MJF(\ea,\eb) with bases \qa\ and \qb\ on \ea\ and \eb, respectively. 

In this way, we have an $O(n^6)$ time subroutine to find the bases \qa\ and \qb\ on \ea\ and \eb, respectively. Given the bases, the computation of \MJF(\ea,\eb) takes an additional quadratic time. The whole process of Step~\ref{alg:twr:approx:bases} of the algorithm thus takes $O(n^6)$~time.

\subsection{Establishing the Tours}\label{sec:tours}

Given a minimum jellyfish pair \MJF(\ea,\eb), we sort the tentacles in decreasing order of length and for each tentacle \ZR(\qi,\rr,\be) in decreasing order check if \cutb{\ZR(\qi,\rr,\be)} is already covered by a longer tentacle. If so, \ZR(\qi,\rr,\be) is removed from the jellyfish pair. We call the resulting jellyfish pair {\em reduced\/} and denote it by~\MJR(\ea,\eb) with individual jellyfish \MJRs(\ea) and~\MJRs(\eb).

Let \tour{{\cal J}} denote the two relative convex hulls of each of the jellyfish in the jellyfish pair ${\cal J}$ inside \P\!. Without loss of generality, if ${\cal J}={\cal J}_1\cup{\cal J}_2$ with 
${\cal J}_1$ and ${\cal J}_2$ being the two jellyfish, \tour{{\cal J}_1} and \tour{{\cal J}_2} also denote the individual relative convex hulls of ${\cal J}_1$ and~${\cal J}_2$ in~\P.

Let \Wa\ and \Wb\ be the relative convex hulls \tour{\MJRs(\ea)} and \tour{\MJRs(\eb)} in \P, respectively. These tours can be computed in linear time by first following the shortest path from each tentacle tip to the next, cyclically around the corresponding head of each jellyfish and then applying the algorithm by Toussaint~\cite{Tou:relconvexhull}; see Figure~\ref{approxfig}(a).
\begin{figure*} 
			    \begin{center} \small
			    \input{\figdirapproxfig.pdf_t}

			    \parcaption{\capw}{\label{approxfig}Illustrating the construction of the tour and the partition into subpaths used in the proof of Lemma~\protect\ref{lem:approx}.}
			    \end{center}
			    \end{figure*}

Consider a polygonal tour that has reflex vertices only at reflex vertices of~\P. A maximal consecutive subsequence of edges of the tour is called a {\em reflex chain\/} of the tour, if each interior vertex of the chain is reflex in the tour. The end vertices of a reflex chain  must therefore be convex vertices of the tour. In contrast, a {\em convex chain\/} of the tour is a maximal consecutive subsequence of edges such that each interior and end vertex of the chain is convex in the tour. We note that a single convex vertex of a tour is a convex chain if the preceding and subsequent vertices are reflex.

\begin{lemma}\label{lem:approx}
The tours $(\Wa,\Wb)$ obtained by algorithm {\it Two-Watchman-Route\/} form a two-watchman route and 
$
\maxlen{(\Wa,\Wb)} \leq (7\pi/6 + 3 - \sqrt{3} + \sqrt{5}\arcsin1/\sqrt{5})\maxlen{(\Ta,\Tb)} \approx 5.969\maxlen{(\Ta,\Tb)}
$
\end{lemma}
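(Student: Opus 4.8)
The plan is to establish the two assertions in turn. For the visibility assertion I would first dispose of the case in which \P\ is guardable by two point guards, where Step~\ref{alg:twr:belleville} returns two points of zero length and the bound holds trivially; otherwise \P\ falls under Assumption~\ref{ass:notguardable} and we analyse the loop output. I would show that the reduced pair \MJR(\ea,\eb) still sees all of \dP, after which Lemma~\ref{lem:boundary} upgrades this to all of~\P. The unreduced pair \MJF(\ea,\eb) sees \dP\ by the balancing argument of Section~\ref{sec:jellyfish}: for each edge \be\ either both of its endpoint tentacles lie in one jellyfish and Lemma~\ref{lem:tentacle} covers \be, or a balancing point \rs\ splits the coverage across the two jellyfish. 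For the reduction in Step~\ref{alg:twr:approx:reduce}, a tentacle \ZR(\qi,\rr,\be) is discarded only when \cut{\ZR(\qi,\rr,\be)} is covered by a longer retained tentacle; since \lhp{\cut{\ZR(\qi,\rr,\be)}} is exactly the region that sees the boundary piece \ZR(\qi,\rr,\be) guarded, the longer tentacle's tip, lying in that left polygon, still guards it, so coverage of \dP\ is preserved. Finally, the two jellyfish are enclosed by \Wa\ and \Wb, and since any boundary point seen from an interior point of a relative convex hull is also seen from the point where the line of sight crosses the hull boundary, \Wa\ and \Wb\ together see everything the retained tentacles see, hence all of~\dP.

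For the length bound it suffices by symmetry to bound \len{\Wa}, and by Inequality~(\ref{math:jellyfish}) every tentacle of \MJRs(\ea) has length at most $\rho:=\maxlen{(\Ta,\Tb)}/2$. I would therefore prove the purely geometric claim that the relative convex hull of a jellyfish whose tentacles all have length at most $\rho$ has length at most $2\big(7\pi/6+3-\sqrt3+\sqrt5\arcsin(1/\sqrt5)\big)\rho=\big(7\pi/6+3-\sqrt3+\sqrt5\arcsin(1/\sqrt5)\big)\maxlen{(\Ta,\Tb)}$. Following the construction of Figure~\ref{approxfig}(a) I would traverse the tour from tentacle tip to tentacle tip in cyclic angular order about the head \qa\ and decompose it into the convex chains and reflex chains defined just before the lemma. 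The convex vertices of \Wa\ are tentacle tips, each at geodesic (hence Euclidean) distance at most $\rho$ from \qa, so each convex chain between two consecutive tips can be bounded via the cosine theorem in terms of the angle $\Delta\theta_j$ the two tips subtend at \qa; as $\sum_j\Delta\theta_j=2\pi$, the convex chains contribute at most $2\pi\rho$, the perimeter of the enclosing circle.

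The length in excess of this circle bound is produced entirely by tentacles whose tips are hidden behind reflex vertices of \P: to enclose such a tip the hull must detour around the hiding vertex and back out to the tip, and the reflex chains, which hug reflex vertices, are charged against the portions of the tentacle paths beyond their hiding vertices. I would isolate the worst local gadget, a tentacle running from \qa\ to a reflex vertex and then extending to a deeply hidden tip, and express the hull's detour as a function of how the length $\rho$ is split before and after the hiding vertex and of the turning angle there. Maximising the resulting length-to-$\rho$ ratio over all admissible splits and angles is what pins down the constant: one stationary configuration has critical angle $\alpha$ with $\tan\alpha=1/2$, so $\sin\alpha=1/\sqrt5$, contributing the $\sqrt5\arcsin(1/\sqrt5)$ term; a second critical configuration, associated with a $60^{\circ}$ angle, contributes the $3-\sqrt3$ term; and the residual angular sweep contributes $7\pi/6$, giving $c_0=7\pi/6+3-\sqrt3+\sqrt5\arcsin(1/\sqrt5)\approx5.969$.

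The main obstacle is this last optimisation. The reduction and convex-chain steps are routine, but verifying that the worst case is attained by the single hidden-tip gadget, rather than by several tentacles interacting through shared reflex detours, and that no admissible configuration exceeds the stated ratio, is where the real work lies; the $\arcsin$ and surd terms are precisely the signatures of the stationary points of that maximisation. The most delicate accounting is to handle tentacles that share initial segments so that their detours are counted once rather than once per tentacle, which is exactly what keeps the bound a constant multiple of $\rho$ instead of letting it grow with the number of tentacles.
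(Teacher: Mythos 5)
Your visibility argument follows the paper's route (Lemmas~\ref{lem:boundary} and~\ref{lem:tentacle} plus the balancing points \rs), but the length bound rests on a reduction that does not hold. You propose to prove, as a \emph{purely geometric} statement, that the relative convex hull of a jellyfish whose tentacles all have length at most $\rho$ has length $O(\rho)$. This is false: take a polygon with $k$ deep, narrow spikes arranged around a central region, with the head \qa\ at the center. To see the bottom of each spike a tentacle must penetrate it to depth nearly $\rho$, no spike's tentacle cut is covered by another spike's tentacle (so the reduction in Step~\ref{alg:twr:approx:reduce} discards nothing), and the relative convex hull must enter and exit every spike, giving length $\Theta(k\rho)$. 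Your closing remark that shared initial segments are ``what keeps the bound a constant multiple of $\rho$'' cannot rescue this, since the tentacles in such an example share nothing. The constant in the lemma is not a property of bounded-radius jellyfish; it comes from playing the hull off against the optimal tours. The paper splits \Wa\ into the chain \Ca\ (whose tentacle cuts are covered only by \Ta, bounded by $4\pi R/3$ because \Ta\ itself has length at most $2R$ and must reach both bounding cuts) and the remainder \Wpa\ (whose cuts are all covered by \Tb), and then bounds \Wpa\ either by a $\sqrt{2}$-detour comparison against a tour through \Tb\ of length $2R+\len{\Tb}$, or, when \Tb\ stays at distance greater than $R$ from \qa, by the perimeter of an explicit region \Da\ carved out by circles centered at points of \Tb. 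A separate case, where \Wa\ meets both \Ta\ and \Tb, is handled by comparison with the single watchman route \Ws\ computed in Step~\ref{alg:twr:watchman} --- a case your sketch omits entirely, and the reason that step exists.

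Relatedly, your attribution of the terms in the constant to stationary points of a single ``hidden-tip gadget'' optimisation is reverse-engineered and does not match where they actually arise: $\sqrt{5}\arcsin(1/\sqrt{5})$ comes from maximising the circular-arc perimeter of \Dax\ at $y_1=y_2=(\sqrt{5}/2-1)R$; $3-\sqrt{3}$ comes from the two segments $s'_1,s'_2$ of length $(1-\sqrt{3}/2)R$ in an equilateral-triangle configuration of \qa, \tb, \pc; and $7\pi/6$ aggregates the $4\pi/3$ from \Ca\ with the $\pi$ from \Dax. Without the tours \Ta\ and \Tb\ entering the estimates, no such optimisation over a local gadget can terminate in a constant, so the core of the proof is missing rather than merely under-detailed.
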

\begin{proof}
The correctness of the algorithm follows from Lemmas~\ref{lem:boundary},~\ref{lem:tentacle}, and the fact that since the two tours together see every boundary edge this ensures that they form a two-watchman route.

To prove the approximation bound, we assume that \Ta\ and \Tb\ do not intersect, otherwise $\len{\Ws}\leq\len{\Ta}+\len{\Tb}\leq2\maxlen{(\Ta,\Tb)}$ immediately proving the lemma since $\maxlen{(\Wa,\Wb)}\leq\len{\Ws}$ and the algorithm initializes the two-watchman route with a shortest watchman tour and a point in Step~\ref{alg:twr:watchman} and then only updates its two-watchman route in Step~\ref{alg:twr:approx:test} if its max length is strictly smaller than that of the current route pair. 

The algorithm computes the reduced minimum jellyfish pair \MJR(\ea,\eb) in Step~\ref{alg:twr:approx:reduce}. By trying all pairs of extensions in Step~\ref{alg:twr:approx}, the algorithm must necessarily consider a pair, \ea\ and \eb, intersected by the tours \Ta\ and \Tb; see Lemma~\ref{lem:crossing}. 
Consider the tentacles in \MJRs(\ea) and \MJRs(\eb) centered on the bases \qa\ on \ea\ and \qb\ on \eb, respectively. 
Every tentacle has length at most $R=\maxlen{(\Ta,\Tb)}/2$ by Inequality~(\ref{math:jellyfish}), since we can assume that \Ta\ intersects \ea\ and \Tb\ intersects \eb, whereby the geodesic radii of \Wa\ and \Wb\ are both at most~$R$. 

Each convex chain of \Wa\ has length at most $2\pi R$, where $R$ is an upper bound on the length of each geodesic shortest path from \qa\ to any point on \Wa, since the circle is the longest convex curve of radius~$R$. This follows from Archimedes' axioms for arc-length~\cite{Arc:arc-length,Bro:archimedes}.

We make a case analysis and bound the length of \Wa\ for each case separately. 
\begin{description}
\item[\Wa\ has one convex chain.]
The convex chain of \Wa\ has length at most $2\pi R$ as we noted above. 
The length of the possible reflex chain of \Wa\ is bounded by at most two radii from \qa\ to the circle perimeter since its length is bounded by that of the two tentacles of \MJFs(\ea) connecting to the endpoints of the reflex chain. Hence, 
\begin{equation}\label{eqn:caseA}
\len{\Wa} \leq 2\pi R + 2R = (\pi+1)\maxlen{(\Ta,\Tb)}.
\end{equation}

\item[\Wa\ has at least two convex chains.]
This case is further subdivided into the cases:
\begin{description}
\item[\Wa\ or its interior intersects both \Ta\ and~\Tb.]
If \Wa\ intersects \Ta, let \pa\ be an intersection point of \Wa\ with \Ta. From \pa\ move counterclockwise along \Wa\ until the endpoint of a tentacle from \MJRs(\ea) is reached at \paL. Similarly, move clockwise along \Wa\ until the endpoint of a tentacle from \MJRs(\ea) is reached at \paR. Since the region bounded by moving counterclockwise from \paR\ along \Wa\ to \paL, from \paL\ along the tentacle of \MJRs(\ea) to \qa, and from \qa\ along a tentacle of \MJRs(\ea) to \paR\ forms a pseudotriangle, relatively convex inside \P, the shortest path from \pa\ to \qa\ has length at most that of one of the tentacles of \MJRs(\ea), which is bounded by~$R$. If \Wa\ does not intersect \Ta, then \Ta\ lies interior to \Wa\ and we let \pa\ be a point closest to \qa. Again, the length of shortest path from \pa\ to \qa\ is bounded by $R$, since extending the first edge until it hits the tour \Wa\ at some point \pa', the shortest path from \pa'\ to \qa\ contains \pa\ and since \pa'\ lies on \Wa, we can use the previous argument to show that the length of the shortest path from \pa'\ to \qa\ is bounded by $R$.

If \Wa\ intersects \Tb, let \pb\ be intersection point of \Wa\ with \Tb, otherwise let \pb\ be a point on \Tb\ closest to \qa. In the same way as above, we can bound the length of the shortest path from \pb\ to \qa\ by $R$.

Now, construct a tour \Wbnd\ by following \Ta\ from \pa\ around in counterclockwise order, following the shortest path from \pa\ to \qa, the shortest path from \qa\ to \pb, from \pb\ around \Tb\ in counterclockwise order, and then back along the shortest paths via \qa\ to \pa. It is clear that \Wbnd\ is a watchman tour and therefore has length no shorter than \Ws\ and since the algorithm initializes the two-watchman route with \Ws\ and an arbitrary point in Step~\ref{alg:twr:watchman} and then only updates its two-watchman route in Step~\ref{alg:twr:approx:test} if it has max-length strictly smaller than the current route pair, the length of \Wa\ is bounded by
\begin{equation}\label{eqn:caseB}
\len{\Wa} 
<
\len{\Ws} 
\leq
\len{\Wbnd} 
\leq
\len{\Ta} + \len{\Tb} + 4R
\leq
4\maxlen{(\Ta,\Tb)}.
\end{equation}

\item[\Wa\ or its interior intersects at most one of \Ta\ or~\Tb.]
Without loss of generality, we assume that \Ta\ does not intersect \Wa\ or the interior of \Wa, otherwise we interchange the roles of \Ta\ and \Tb\ in the argument below.

Cut \P\ along the segments of \Wa\ into disjoint pieces, thus partitioning \P\ into separate components; see Figure~\ref{approxfig}(b). Let \QQ\ be the component containing \Ta. The subpath $\Pi$ of \Wa\ bounding \QQ\ is either a single segment connecting two reflex vertices of \Wa\ or it consists of a single convex chain \C\ between two segments having reflex vertices \v\ and \v'\ at the endpoints. 
Extend $\Pi$ at both ends along \Wa\ until convex vertices are reached at both ends, these must exist since \Wa\ has at least two convex chains. Let the path thus obtained be $\Pi'$\!. 
If $\Pi$ includes a convex chain, consider the two paths $\Pi'\setminus\C$ that we denote by \Pil\ and \Pir, where \Pil\ appears before \Pir\ during a counterclockwise traversal of \Wa\ starting at a point in \C; see Figure~\ref{approxfig}(b). 

Walk along \C\ from each endpoint \v\ and \v'\ until convex vertices \vbar\ and \vbar'\ are reached for which the associated tentacle cuts are not covered by \Tb, if there are any. These cuts must be covered by \Ta. Let \Ca\ be the subpath of \C\ from \vbar\ to \vbar'. If $\Ca=\C$, then already the end points of \C\ have associated tentacle cuts that are not covered by~\Tb\ and if all associated tentacle cuts to the vertices of \C\ are covered by \Tb, then we let $\Ca=\emptyset$.

If \Ca\ is not empty, consider the two endpoints \vbar\ and \vbar'\ of \Ca\ (if the endpoints coincide, \Ca\ is a single point and has length~$0$, so we assume that the endpoints do not coincide); see Figure~\ref{approxfig}(b). Let \e\ and \e'\ be the two tentacle cuts covered by the two endpoints \vbar\ and \vbar'\ of \Ca. Since \Tb\ does not cover any of \e\ or \e'\!, \Ta\ must have points to the left of these cuts. Without loss of generality, we assume that \vbar\ is passed before \vbar'\ during a counterclockwise traversal of \Wa\ starting at a point interior to~\Ca. We make the additional observation that any tentacle cut associated to convex vertices of $\Wa\setminus\Ca$ must be covered by \Tb, otherwise \Ta\ would intersect \Wa\ or have points in its interior.%
\begin{figure*} 
			    \begin{center} \small
			    \input{\figdirboundfig.pdf_t}

			    \parcaption{\capw}{\label{boundfig}Illustrating the proof of Lemma~\protect\ref{lem:approx}. Bounding the length of~\protect\Ca.}
			    \end{center}
			    \end{figure*}

Let $l$ and $l'$ be the lines through \e\ and \e'\ and let \p\ be the intersection point between $l$ and $l'$, assuming that it exists. If $l$ and $l'$ are parallel or \p\ lies after \vbar\ on $l$ in the direction of \e, then the angle between the tentacle of \MJRs(\ea) connecting \qa\ and \vbar\ and the tentacle of \MJRs(\ea) connecting \qa\ and \vbar'\ containing \Ca\ is at most $\pi$, since \qa, \vbar', \p, and \vbar\ form a convex quadrilateral (\p\ can be taken to be a point on $l$ implicitly at infinity, if $l$ and $l'$ are parallel); see Figure~\ref{boundfig}(a). Each segment of \Ca\ can be projected onto a half circle centered at \qa\ having radius $R$ without overlap so the length of \Ca\ is therefore bounded by~$\pi R$ in this case. 

Now, if \p\ lies before \vbar\ on $l$ in the direction of \e; see Figure~\ref{boundfig}(b), then since \Ta\ lies in \QQ\ without intersecting \Ca, \Ta\ must intersect the two cuts \e\ and \e'. Assume these intersection points are \rr\ and \rr'. The distance between \rr\ and \rr'\ is at most $R$ since $\len{\Ta}\leq\maxlen{(\Ta,\Tb)}=2R$ by Inequality~\ref{math:jellyfish}. Assume that the longest tentacle of \MJRs(\ea) connecting \qa\ with a convex vertex of \Ca\ has length $D\leq R$ and let \dd\ and \dd'\ be the points at distance exactly $D$ from \qa\ to $l$ and $l'$, respectively. The point \dd\ must lie between \vbar\ and \rr\ on \e, otherwise \Ta\ intersects the longest tentacle of \MJRs(\ea) and therefore also \Ca. For the same reason, the point \dd'\ must lie between \vbar'\ and \rr'\ on \e'.
The distance $\len{\dd,\dd'}\leq R$ since \Ta\ does not intersect \Wa. Given that the angle at \dd, \qa, \vbar, and the angle at \dd', \qa, \vbar', are each at most $\pi/2$, that the angle at \dd, \qa, \dd'\ is $\theta$, the segments of \Ca\ can be projected onto a semicircle centered at \qa\ having radius $D$ without overlap. The length of this semicircle is $(\pi+\theta)D$, which is maximized for $\theta=\pi/3$ and~$D=R$ as obtained by standard analytic methods. Thus, 
\begin{align}\label{eqn:semicircle}
\len{\Ca}
&\leq
4\pi R/3
\end{align}
is the maximum bound in all cases.

Next, we bound the length of the remainder of \Wa. Let \p\ now be an intersection point between \Wa\ and \ea. From \p, walk counterclockwise along \Wa\ back to \p, shortcutting those convex vertices associated to tentacle cuts only covered by \Ta\ and let \Wpa\ be the tour thus obtained. Let \MJWs(\ea) be the subset of tentacles of \MJRs(\ea) for which the associated tentacle cut is covered by \Tb. By construction, the tour \Wpa\ is the relative convex hull of the tentacles of \MJWs(\ea) in \P\!. All convex vertices on \Wa\ associated to tentacle cuts covered only by \Ta\ lie on \Ca\ by our observation above, giving us
\begin{align}\label{eqn:nonintersection}
\len{\Wa}
&\leq
\len{\Ca}+2R+\len{\Wpa},
\end{align}
since the union of the two tentacles connecting \qa\ with the endpoints of \Ca\ intersects \Wpa, they together cover the tentacle cuts covered by \Wa\ at the same points, and \Wa\ is relatively convex. Our objective now is to bound the length of~\Wpa.

Let \e\ be an extension covered by \Ta\ such that \Ta\ makes a reflection on \e. Such an extension must exist by Lemma~\ref{lem:crossing} since, if \Ta\ properly covers \e, then it must cover some other extension in $\lhp{\e}$. Without loss of generality, we can therefore assume that $\ea=\e$ and thus that all but one point of \Ta\ lie in $\P\setminus\lhp{\ea}$. It also follows that \Tb\ cannot cover \ea, since this would mean that \Ta\ could be made shorter contradicting that \Ta\ and \Tb\ are as short as possible. Hence, all points of \Tb\ lie in~$\P\setminus\lhp{\ea}$. 

We identify two different cases.
\begin{description}
\item[\Wpa\ has at least two reflex chains or one reflex chain with both endpoints in~$\P\setminus\lhp{\ea}$]

We prove that \Tb\ must intersect \Wpa. Let \e'\ and \e''\ be the two tentacle cuts associated to convex vertices of \Wpa\ on either side of a reflex chain with both end points in~$\P\setminus\lhp{\ea}$. The cuts \e'\ and \e''\ must exist and do not intersect in $\P\setminus\lhp{\ea}$, otherwise \Wpa\ does not have the stated reflex chain. Now, \Tb\ must cover \e'\ and \e''\ so it either covers \e'\ exactly at the reflection point of \Wpa\ and \e'\ or \Tb\ covers \e''\ exactly at the reflection point of \Wpa\ and \e''\!, giving us an intersection point between \Wpa\ and \Tb\ in either case, or \Tb\ contains a path that connects a point in $\big(\P\setminus\lhp{\ea}\big)\cap\lhp{\e'}$ with a point in $\big(\P\setminus\lhp{\ea}\big)\cap\lhp{\e''}$ which must intersect \Wpa\ since \e'\ and \e''\ are separated by reflex vertices on \Wpa\ touching the boundary of~\P\ and \Tb\ has no point in the interior of~\lhp{\ea}; see Figure~\ref{coverfig}(a).
\begin{figure*} 
			    \begin{center} \small
			    \input{\figdircoverfig.pdf_t}

			    \parcaption{\capw}{\label{coverfig}Illustrating the proof of Lemma~\ref{lem:approx}. (a) Tours \protect\Tb\ and \protect\Wpa\ intersect if \protect\Wpa\ has reflex vertices. (b)--(e) Constructing the chain \protect\Cbi\ from~\protect\Ci.}
			    \end{center}
			    \end{figure*}

We can construct a tour that connects \qa\ with the closest point on \Tb, follows \Tb\ around in counterclockwise order and then connects to \qa. This tour has the following properties, that we call {\em bounding properties},
\begin{enumerate}
\item it covers all the tentacle cuts covered by \Wpa\ (and thus by \MJWs(\ea)), 
\item it lies completely in $\ea\cup(\P\setminus\lhp{\ea})$,
\item it has a convex vertex at \qa, and 
\item it has length at most $2R+\len{\Tb}$.
\end{enumerate}
The last bounding property follows since \Tb\ intersects \Wpa\ (or its interior).

Now, let \Ub\ be the shortest tour that obeys the bounding properties. We show how to extend \Ub\ to a new tour \Vb\ so that \Wpa\ is contained in \Vb\ and $\len{\Vb}\leq\sqrt{2}\len{\Ub}$. Since \Wpa\ is relatively convex and contained in \Vb, we immediately have 
\begin{align}\label{eqn:opttour}
\len{\Wpa}
&\leq
\len{\Vb}
\leq
\sqrt{2}\len{\Ub}
\leq
2\sqrt{2}R+\sqrt{2}\len{\Tb}.
\end{align}%

To prove Inequality~(\ref{eqn:opttour}), follow \Ub\ counterclockwise from \qa, subdividing the tour into convex chains \Ci\!, $1\leq i\leq k$, such that \uia\ is the reflex vertex before \Ci\ on \Ub, and \uib\ is the reflex vertex after \Ci\ on~\Ub. In the cases where the first vertex after \qa\ or before \qa\ is convex, we consider \qa\ to be the vertex \uaa\ and \ukb\ according to the case. 
For each convex chain \Ci\!, let \mi\ be the number of convex vertices on the chain, denoted by $\via,\ldots,\vimi$, these are the tips of the tentacles of \MJWs(\ea), and let $\eia,\ldots,\eimi$ be the corresponding associated tentacle cuts; see Figure~\ref{coverfig}(b)--(e). It is clear that no tentacle of \MJWs(\ea) intersects the interior of \lhp{\eij}, for any tentacle cut \eij, since $\MJWs(\ea)\subseteq\MJRs(\ea)$ and if it did, it would cover \eij\ whereby \eij\ and its associated tentacle would be discarded as \MJRs(\ea) is constructed. (To guarantee this property is in fact the reason why the algorithm constructs \MJR(\ea,\eb) in Step~\ref{alg:twr:approx:reduce}.)

To obtain \Vb\ from \Ub, we replace each convex chain \Ci, from \uia\ to \uib\ by a different path \Cbi. 
If $\mi=1$, i.e., \Ci\ has only one convex vertex \via, then let \riaa\ be the endpoint of \SP(\uia\!,\eia) on the tentacle cut \eia\!. Similarly, let \riab\ be the endpoint of \SP(\uib\!,\eia) on \eia\!. We let \Cbi\ be the shortest path from \uia\ to \eia, the path from \riaa\ to \riab\ on \eia, followed by the shortest path from \eia\ to \uib; see Figure~\ref{coverfig}(b). The convex vertex \via\ lies on the segment from \riaa\ to \riab\ on \eia, otherwise \Ub\ can be made shorter contradicting that \Ub\ is the shortest tour obeying the bounding properties. The length of the subpath of \Cbi\ from \uia\ to \via\ is upper bounded by the length of the two catheti connecting \uia\ with \via, one being parallel to \eia\ and the other orthogonal to \eia. Similarly, the length of the subpath of \Ci\ from \uia\ to \via\ is lower bounded by the length of the line segment from \uia\ to \via\ forming the hypotenuse of a right angled triangle with \uia\ and \via\ as two corners. From elementary geometry it follows that the sum of the lengths of the catheti is bounded by $\sqrt{2}$ times the length of the hypotenuse in a right angled triangle. Thus, the length of the subpath of \Cbi\ from \uia\ to \via\ is at most $\sqrt{2}$ times the length of the subpath of \Ci\ from \uia\ to \via.
The same argument ensures that the length of the subpath of \Cbi\ from \via\ to \uib\ is also at most $\sqrt{2}$ times the length of the subpath of \Ci\ from \via\ to~\uib.

We proceed inductively on the parameter \mi\!, the number of tentacle cuts covered along the path \Ci\!, as follows. Let \pia\ be the intersection of \eia\ and \eib\ (or the directed lines along \eia\ and \eib). The point \pia\ must exist, otherwise \Ci\ is not a convex chain but contains reflex vertices. Let \aia\ be the angle of the wedge $\P\setminus\big(\lhp{\eia}\cap\lhp{\eib}\big)$ at~\pia; see Figures~\ref{coverfig}(c)--(e).

We let $\Cbi\defeq\Cbiam=\Cbiab\cup\Cbibm$ be the path from \uia\ to \uib, where \Cbiab\ is a path from \uia\ to \vib\ that covers \eia, contains \via\!, and is at most $\sqrt{2}$ times longer than the path from \uia\ to \vib\ along \Ci\!. Since \Ci\ from \vib\ to \uib\ covers fewer than \mi\ tentacle cuts, we can assume that \Cbibm\ from \vib\ to \uib\ constructed inductively covers the remaining tentacle cuts, contains the vertices $\vib,\ldots,\vimi$, and is at most $\sqrt{2}$ times longer than the path from \vib\ to \uib\ along~\Ci\!.

\medskip
It remains to describe the construction of \Cbiab\ from \uia\ to \vib\!, that covers one tentacle cut \eia\!, passes one vertex \via\!, and is at most $\sqrt{2}$ times longer than the path from \uia\ to \vib\ along \Ci\!. 

\begin{description}
\item[]
If \aia\ is obtuse or a right angle, let \SP(\eia,\eib) be the shortest path between \eia\ end \eib\!. The path either degenerates into the single point \pia\!, if \eia\ end \eib\ intersect, or it is a path connecting a point \riab\ on \eia\ to a point \riba\ on \eib\!. In this case, \Cbiab\ is the shortest path from \uia\ to \eia\!, the path from \riaa\ to \riab\ (or \pia) on \eia\!, followed by the shortest path from \eia\ to \eib\!, the path from \riba\ (or \pia) to \vib; see Figure~\ref{coverfig}(c) and~(d).

As previously noticed, the length of the subpath of \Cbiab\ from \uia\ to \via\ is upper bounded by the length of the two catheti connecting \uia\ with \via\!, one being parallel to \eia\ and the other orthogonal to \eia\!. Similarly, the length of the subpath of \Ci\ from \uia\ to \via\ is lower bounded by the length of the line segment from \uia\ to \via\ forming the hypotenuse of a right angled triangle with \uia\ and \via\ as two corners. From elementary geometry it follows that the sum of the lengths of the catheti is bounded by $\sqrt{2}$ times the length of the hypotenuse in a right angled triangle. Thus, the length of the subpath of \Cbiab\ from \uia\ to \via\ is at most $\sqrt{2}$ times the length of the subpath of \Ci\ from \uia\ to \via\!.
The same argument ensures that the length of the subpath of \Cbiab\ from \via\ to \vib\ is also at most $\sqrt{2}$ times the length of the subpath of \Ci\ from \via\ to~\vib\!, since the angle \aia\ is obtuse or a right angle.

\item[]
If \aia\ is acute, consider the segment $\s\defeq[\qa,\pia]$ allowed to pass through the boundary of \P\!, if \eia\ and \eib\ do not intersect. Let  \ls\ be the line orthogonal to \s\ passing through \pia\!. We slide \ls\ along the segment \s\ from \pia\ towards \qa\ until \ls\ touches a point of \Ci\ between \via\ and \vib\ inclusive, giving the line \lss\!. In fact, this intersection point is one of \via\ or \vib\!. Let \riab\ be the intersection point of \SP(\eia\!,\lss) on \eia\ and \rias\ the intersection point of \SP(\eia\!,\lss) on \lss\!. 
Similarly, let \riba\ be the intersection point of \SP(\eib\!,\lss) on \eib\ and \ribs\ the intersection point of \SP(\eib\!,\lss) on~\lss\!. 
As before, \riaa\ is the intersection point of \SP(\uia\!,\eia) with \eia.
Now, \Cbiab\ is the shortest path from \uia\ to \eia\!, the path from \riaa\ to \riab, followed by the shortest path from \eia\ to \lss\ at \rias, the path from \rias\ to \ribs\ on \lss\!, the shortest path from \lss\ to \eib\ ending at \riba\!, followed by the path from \riba\ to \vib; see Figure~\ref{coverfig}(e). Since \aia\ is acute, the interior angles between \eia\ and \lss\ and between \eib\ and \lss\ are both obtuse or right angles. Hence, the lengths of the subpaths of \Cbi\ from \uia\ to \via\ and from \via\ to \vib\ are upper bounded by the lengths of the corresponding catheti of the right angled triangles, one connecting \uia\ and \via\!, with one catheter being parallel to \eia\!, and one connecting \via\ to \vib\!, with one catheter parallel to \lss\!. As before, the length of this subpath of \Cbiab\ is at most $\sqrt{2}$ times the length of the subpath of \Ci\ from \uia\ to~\vib\!.

\end{description}

Thus, we have proved Inequality~(\ref{eqn:opttour}) for both cases.

\item[\Wpa\ has no reflex chain or one reflex chain with at most one endpoint in~$\P\setminus\lhp{\ea}$]

If there is a point of \Tb\ at distance at most $R$ from \qa, then there exists a tour having the bounding properties as defined above and we can bound the length of \Wpa\ in exactly the same way as in the previous case. Hence, from now on, we assume that all points of \Tb\ have distance greater than $R$ from~\qa.

\medskip
Since each tentacle of \MJWs(\ea) has length at most $R$, every point of the tour \Wpa\ has distance at most $R$ to \qa\ and each tentacle is perpendicular to the tentacle cut which in turn intersects \Tb, except where \Wpa\ has a reflex chain. 
Let \Da\ be the possible set of points that can be points of \Wpa, given \Tb\ but for any possible set of tentacle cuts. Thus, \Wpa\ is contained in \Da. The perimeter of \Da\ consists of connected circular arcs with possible line segments in between. If \Wpa\ has a reflex chain, the perimeter alternates between circular arcs centered at \qa\ and straight line segments connecting those reflex boundary vertices where \Wpa\ also has reflex vertices. Any other polygon boundary part intersecting the interior of the region is disregarded as we want to obtain the maximum possible perimeter length in relation to $R$, given that \Wpa\ has at most one reflex chain. 
\Da\ is the region painted light green in Figure~\ref{circfig}(a).%
\begin{figure*} 
			    \begin{center} \small
			    \input{\figdircircfig.pdf_t}

			    \parcaption{\capw}{\label{circfig}Illustrating the proof of Lemma~\protect\ref{lem:approx}. Bounding the length of~\protect\Wpa\ when it has at most one reflex chain with at most one endpoint in~$\protect\P\setminus\protect\lhp{\protect\ea}$ and the distance from \protect\Tb\ to \protect\qa\ is greater than~$R$.}
			    \end{center}
			    \end{figure*}
Since \Wpa\ passes \qa\ and the remaining perimeter of \Da\ is convex (except where \Wpa\ follows the reflex chain), it follows that $\len{\Wpa}\leq\sup\len{\partial\Da}$ over the set of all possible tentacle cuts.

We bound the length of the perimeter of \Da\ as follows. Consider the smallest angle cone $\angle\Tb$ with apex \qa\ that contains \Tb. Let \ta\ and \tb\ be two points of \Tb\ that touch the sides of this cone; see Figure~\ref{circfig}(a). We assume a coordinate system where \qa\ is at the origin, \ta\ is at distance $R+y_1$, with $y_1>0$, directly above \qa\ and \tb\ lies in the first quadrant in this coordinate system at distance $R+y_2$, with $y_2>0$, from \qa. We can also assume that any reflex chain of \Wpa\ lies in the first quadrant in the coordinate system. It is clear that if the reflex chain has points in the fourth quadrant, the length of \Da\ can never attain its maximum possible ratio to $R$. Denote the part of \Da\ in the fourth quadrant (to the left of $\angle\Tb$) by \Dal. We can bound the length of the two circular arcs that form part of the perimeter of \Dal\ by
\begin{equation}
\len{\partial\Dal} = \pi R/2 - \alpha_1 R/2 + \alpha_1(R+y_1) = (\pi+\alpha_1)R/2 + \alpha_1\cdot y_1,
\end{equation}
where $\alpha_1=2\arcsin\big(R\big/(2R+2y_1)\big)$.

If we denote the part of \Da\ inside the cone $\angle\Tb$ by \Dac, the light green region in the cone $\angle\Tb$ in Figure~\ref{circfig}(a), we can bound the length of the circular arc perimeter of \Dac\ by $\len{\partial\Dac}\leq R\cdot\beta$, where $\len{[\ta,\tb]}^2 = (R+y_1)^2 + (R+y_2)^2 -2(R+y_1)(R+y_2)\cos\beta$ and $\beta$ is the angle of the cone $\angle\Tb$, by the cosine theorem; see Figure~\ref{circfig}(a). Again, we can assume that any reflex chain of \Wpa\ does not intersect the cone $\angle\Tb$ and therefore also not inside \Dac, otherwise the length of the perimeter of \Da\ can never attain its maximum possible ratio to~$R$.

If the remaining part of \Da\ intersects the single reflex chain of \Wpa, the length of the sequence of alternating circle arcs and line segments is bounded be the length of two line segments along two radii from \qa\ of length $R$ and \tb\ of length $R+y_2$, from their intersection point to their respective ends at \pa\ and \pb; see Figure~\ref{circfig}(a) and~(b). We denote these segments by $s_1$ and $s_2$ as in Figure~\ref{circfig}(b). Denote the angle between $s_1$ and $s_2$ by $\gamma$. Let \pc\ be the intersection point of the two circular arcs with center at \qa\ and length $R$ and center at \tb\ and length $R+y_2$ having positive $x$-coordinate in our coordinate system and let $l_1$ and $l_2$ be the subarcs of these two circular arcs from \pc\ to \pa\ and \pb, respectively; see Figure~\ref{circfig}(b). Let the radius $[\qa,\pa]$ have angle $\theta_1$ to the radius $[\qa,\pc]$ and let the radius $[\tb,\pb]$ have angle $\theta_2$ to the radius $[\tb,\pc]$. Thus, $\gamma=\pi/2-\alpha_2/2+\theta_1+\theta_2$.
Let $s'_1$ be the subsegment of $s_1$ outside the triangle $\bigtriangleup\qa,\tb,\pc$ and let $s'_2$ be the subsegment of $s_2$ outside $\bigtriangleup\qa,\tb,\pc$. The angle of the triangle at \pc\ is $\pi/2-\alpha_2/2\leq\gamma$ so $\len{s_1}-\len{s'_1}+\len{s_2}-\len{s'_2}\leq\len{l_1}+\len{s'_1}+\len{l_2}+\len{s'_2}$ giving $\len{s_1}+\len{s_2}\leq\len{l_1}+2\len{s'_1}+\len{l_2}+2\len{s'_2}$.
Let the two circle arcs symmetric to \Dal\ in the first quadrant, to the right of $\angle\Tb$ bound the region \Dar\ and we have
\begin{equation}
\len{\partial\Dar} =  (\pi+\alpha_2)R/2 + \alpha_2\cdot y_2,
\end{equation}
where $\alpha_2=2\arcsin\big(R\big/(2R+2y_2)\big)$; se Figures~\ref{circfig}(a) and~(c).
Define \Dax\ to be the union of \Dal, \Dac, and \Dar, i.e., the locus \Da\ when\Wpa\ does not have a reflex chain in the interior of \Da; see Figure~\ref{circfig}(c). Since $\len{s_1}+\len{s_2}\leq\len{l_1}+2\len{s'_1}+\len{l_2}+2\len{s'_2}$, we have that $\len{\partial\Da}\leq\len{\partial\Dax}+2\len{s'_1}+2\len{s'_2}$ and we bound the length of these three parts separately.

To maximize the length of the perimeter of \Dax\ we observe that the angle $\beta$ is as large as possible when the segment $[\ta,\tb]$ is as large as possible and the values $y_1$ and $y_2$ are as small as possible. The length of $[\ta,\tb]$ is bounded above by $R$ and $y_1$ and $y_2$ are minimal if the segment $[\ta,\tb]$ is almost tangent to the circle of radius $R$ centered at \qa. With this setup we can write the length of the perimeter of \Dax\ as a function of $y_1$, where $0\leq y_1\leq(\sqrt{2}-1)R$, and using standard variational calculus we obtain that
\begin{equation}\label{eqn:maxDax}
\sup_{0\leq y_1\leq(\sqrt{2}-1)R}\!\!\!\!\!\!\!\!\len{\partial\Dax(y_1)} < \left(\pi+2\sqrt{5}\arcsin\frac{1}{\sqrt{5}}\right)R,
\end{equation}
which occurs when $y_1=y_2=(\sqrt{5}/2-1)R$.

To maximize the lengths of $s'_1$ and $s'_2$, we realize that they are the longest when the radius $[\qa,\pa]$ intersects the segment $[\tb,\pc]$ perpendicularly and when the radius $[\tb,\pb]$ intersects the segment $[\qa,\pc]$ perpendicularly, i.e., when $\theta_1=\theta_2=\alpha_2/2$. Hence, the two segments are the longest when $\alpha_2$ is as large as possible, which occurs when the triangle $\bigtriangleup\qa,\tb,\pc$ is equilateral, whereby
\begin{equation}\label{eqn:maxseg}
\sup_{0\leq\alpha_2\leq\pi/6}\!\!\!\!\len{s'_1} =\!\!\!\!\sup_{0\leq\alpha_2\leq\pi/6}\!\!\!\!\len{s'_2} = (1-\sqrt{3}/2)R.
\end{equation}
This gives us a final bound of
\begin{align}\label{eqn:maxDa}
\len{\Wpa}
&<
\left(\pi+2\sqrt{5}\arcsin1/\sqrt{5}+4-2\sqrt{3}\right)R 
\nonumber\\
&=
\left(\pi/2+\sqrt{5}\arcsin1/\sqrt{5}+2-\sqrt{3}\right) \maxlen{(\Ta,\Tb)}.
\end{align}
\end{description}

We have from the two cases above, Inequalities~(\ref{eqn:opttour}) and~(\ref{eqn:maxDa}), that 
\begin{align}\label{eqn:maxWpa}
\!\!\!\!\len{\Wpa} 
&\leq
\max
\left\{
\begin{array}{l}
2\sqrt{2}R+\sqrt{2}\len{\Tb} = 2\sqrt{2}\maxlen{(\Ta,\Tb)} \approx 2.8284\maxlen{(\Ta,\Tb)},
\\
\!\!\left(\pi/2+\sqrt{5}\arcsin1/\sqrt{5}+2-\sqrt{3}\right)\maxlen{(\Ta,\Tb)} \approx 2.8754\maxlen{(\Ta,\Tb)}.
\end{array}
\right.
\end{align}
Hence, we choose the larger of the two values as our bound for the length of~\Wpa.

\end{description}

From Inequalities~(\ref{eqn:semicircle}), (\ref{eqn:nonintersection}), and (\ref{eqn:maxWpa}), we have for this case
\begin{align}\label{eqn:caseC}
\len{\Wa} 
&\leq
\len{\Wpa} + 2R + \len{\Ca}
\leq
\left(\pi+2\sqrt{5}\arcsin1/\sqrt{5}+4-2\sqrt{3}\right)R + 2R + 4\pi R/3
\nonumber\\&
=
\big(7\pi/3 + 6 - 2\sqrt{3} + 2\sqrt{5}\arcsin1/\sqrt{5}\big) R
\nonumber\\&
=
\big(7\pi/6 + 3 - \sqrt{3} + \sqrt{5}\arcsin1/\sqrt{5}\big) \maxlen{(\Ta,\Tb)}.
\end{align}

\end{description}

From the three cases that we have dealt with, Inequalities~(\ref{eqn:caseA}),~(\ref{eqn:caseB}), and~(\ref{eqn:caseC}), we have
\begin{align}
\len{\Wa}
&\leq
\max
\left\{
\begin{array}{l}
(\pi+1)\maxlen{(\Ta,\Tb)},
\\
4\maxlen{(\Ta,\Tb)},
\\
(7\pi/6 + 3 - \sqrt{3} + \sqrt{5}\arcsin1/\sqrt{5})\maxlen{(\Ta,\Tb)}.
\end{array}
\right.
\end{align}

The last case has the worst upper bound $\len{\Wa}\leq(7\pi/6 + 3 - \sqrt{3} + \sqrt{5}\arcsin1/\sqrt{5})\maxlen{(\Ta,\Tb)} \approx 5.969\maxlen{(\Ta,\Tb)}.$

Making the same argument for~\Wb, we obtain the same bound from which the lemma then follows.
\end{proof}

We note that we have made a slight overestimation when bounding the length of the perimeter of \Da\ as we are maximizing the length of the perimeter of \Dax\ and the lengths of $s'_1$ and $s'_2$ separately, Inequalities~(\ref{eqn:maxDax}) and~(\ref{eqn:maxseg}).
Some numerical experimentation indicates that our overestimation only affects the overall approximation bound in the second decimal, leading us to not pursue any improvement further.

\subsection{Complexity Analysis of the Algorithm}\label{sec:complexity}

The complexity analysis of the algorithm is straightforward. The for-loop in Step~\ref{alg:twr:approx} considers $O(n^2)$ pairs of extensions. Computing the tentacles and bases in Step~\ref{alg:twr:approx:bases} takes $O(n^6)$ time as established in Section~\ref{sec:computation}. The work in Step~\ref{alg:twr:approx:reduce} takes $O(n\log n)$ time since it is dominated by sorting the tentacles by length. Step~\ref{alg:twr:approx:compute} requires linear time using the algorithm by Toussaint~\cite{Tou:relconvexhull} and the test in Step~\ref{alg:twr:approx:test} takes constant time. Hence, the total time complexity for the algorithm is~$O(n^{8})$.

\begin{theorem}\label{thm:approx}
The {\it Two-Watchman-Route\/} algorithm computes a $5.969$-approx\-ima\-tion of the minmax two-watchman route and a $11.939$-approx\-ima\-tion of the minsum two-watchman route
in $O(n^8)$ time.
\end{theorem}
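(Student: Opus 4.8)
The plan is to assemble the theorem from results already in hand, since essentially all of the difficulty has been absorbed into Lemma~\ref{lem:approx} and the complexity analysis of Section~\ref{sec:complexity}. There are three claims to verify: the $O(n^8)$ running time, the $5.969$ factor for the minmax measure, and the $11.939$ factor for the minsum measure. I expect no deep obstacle; the only point genuinely requiring care is the branch on whether \P\ is guardable by two point guards, because Lemma~\ref{lem:approx} is proved under Assumption~\ref{ass:notguardable} and is valid only when \P\ fails to be two-guardable.

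For the running time I would directly invoke Section~\ref{sec:complexity}: the loop of Step~\ref{alg:twr:approx} iterates over $O(n^2)$ extension pairs, each iteration is dominated by the base computation of Step~\ref{alg:twr:approx:bases} which costs $O(n^6)$ (Section~\ref{sec:computation}), and the remaining per-iteration work (sorting and reducing tentacles, building the relative convex hulls, the comparison test) is lower order. Belleville's preprocessing together with the single shortest-watchman-route computation in Steps~\ref{alg:twr:belleville}--\ref{alg:twr:init} are also subordinate, so the total is $O(n^8)$.

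For the minmax factor I would first dispose of the guardable case: if \P\ is guardable by two point guards, Step~\ref{alg:twr:belleville} returns two such guards, which constitute an optimal minmax two-watchman route of max length $0$, so the output is exact and the factor holds trivially. This is precisely why the Belleville test is needed, since when the optimum has length $0$ no multiplicative guarantee could otherwise be met. In the complementary case Assumption~\ref{ass:notguardable} is in force, so Lemma~\ref{lem:approx} applies: the pair $(\Wa,\Wb)$ produced in the iteration for the extensions intersected by the optimal tours is a two-watchman route satisfying $\maxlen{(\Wa,\Wb)} \leq c\,\maxlen{(\Ta,\Tb)}$ with $c = 7\pi/6 + 3 - \sqrt{3} + \sqrt{5}\arcsin1/\sqrt{5} \approx 5.969$. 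Since the algorithm retains a running minimum in Step~\ref{alg:twr:approx:test} and is initialised with a shortest watchman route in Steps~\ref{alg:twr:watchman}--\ref{alg:twr:init}, its final output $\WWSt$ is no longer than this particular pair, which yields the claimed minmax bound.

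Finally, for the minsum factor I would reduce to the minmax bound through the elementary inequalities of Section~\ref{sec:prelim}. Let \SS\ be a minsum optimum and \TT\ a minmax optimum. Applying Inequality~(\ref{math:sumleq2max}) to the output gives $\sumlen{\WWSt} \leq 2\maxlen{\WWSt}$; the minmax bound gives $2\maxlen{\WWSt} \leq 2c\,\maxlen{\TT}$; and the left inequality of~(\ref{math:sleq2tleq2s}), namely $\maxlen{\TT} \leq \sumlen{\SS}$, gives $2c\,\maxlen{\TT} \leq 2c\,\sumlen{\SS}$. Chaining these yields $\sumlen{\WWSt} \leq 2c\,\sumlen{\SS}$ with $2c \approx 11.939$, completing the argument.
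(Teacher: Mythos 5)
Your proposal is correct and follows essentially the same route as the paper: the theorem is obtained by combining Lemma~\ref{lem:approx} with the complexity accounting of Section~\ref{sec:complexity}, and the minsum factor follows from the generic doubling argument of Inequalities~(\ref{math:sumleq2max}) and~(\ref{math:sleq2tleq2s}). Your explicit handling of the two-guardable branch (where Assumption~\ref{ass:notguardable} fails and the output is exact) is a detail the paper only gestures at, but it matches the paper's intent.
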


\section{A Trade-off between Computation Time and Accuracy}\label{sec:tradeoff}

We can trade the approximation bound for computational efficiency by realizing that relaxing the computation as described in Section~\ref{sec:computation} still provides base positions \qa\ and \qb\ sufficiently close to their optimal positions on \ea\ and \eb.
To do this, we reduce the computation in Section~\ref{sec:computation} to only use the first two cases, taking $O(n^2)$ time rather than $O(n^6)$ time. Let \qa\ be the base placement on \ea\ using the reduced computation and let \qsa\ be its optimal placement on \ea. Let \ZR(\qa,\rr,\be) be a longest tentacle in the jellyfish \MJFs(\qa). 
Now, consider the tentacle \ZR(\qsa,\rs,\be), taking into account that \rr\ moves to \rs\ on \be\ when the head moves from \qa\ to \qsa\ on \ea. 
There must exist a boundary edge \be'\ such that the tentacle \ZR(\qsa,\rr',\be') intersects the line through \qa\ orthogonal to \ea, otherwise \qa\ would be closer to \qsa; see Figure~\ref{tradeoff}. Hence, $\len{\qa^{},\qsa}\leq\len{\ZR(\qsa,\rr',\be')}$ and we have 
\begin{align}\label{eqn:tradeoff}
\len{\ZR(\qa,\rr,\be)}\leq\len{\ZR(\qa,\rs,\be)}\leq\len{\ZR(\qsa,\rs,\be)}+\len{\qa^{},\qsa}\leq\len{\ZR(\qsa,\rs,\be)}+\len{\ZR(\qsa,\rr',\be')}\leq2R,
\end{align}
since $R$ is the length of the longest tentacle in~\MJFs(\ea)\!.
We can argue similarly for \MJFs(\qb) giving us the jellyfish pair~\MJF(\qa,\qb) for which we can compute the reduced jellyfish pair and then its relative convex hulls.
\begin{figure*} 
			    \begin{center} \small
			    \input{\figdirtradeoff.pdf_t}

			    \parcaption{\capw}{\label{tradeoff}Illustrating the trade-off between the optimal base placement and the approximate base placement.}
			    \end{center}
			    \end{figure*}

Using $R'=2R$ in the proof of Lemma~\ref{lem:approx} instead of $R$, all the arguments go through giving us a tour with an approximation ratio at worst twice that of Theorem~\ref{thm:approx}. We state this as a corollary.

\begin{cortheorem}\label{thm:approx2}
The simplified {\it Two-Watchman-Route\/} algorithm computes a $11.939$-approx\-ima\-tion of the minmax two-watchman route and a $23.879$-approx\-ima\-tion of the minsum two-watchman route
in $O(n^4)$ time.
\end{cortheorem}

\section{Computing the Fixed Two-Watchman Route}\label{sec:fixed}

Since the heads \qa\ and \qb\ are given as input in this simpler case of the problem, it suffices to compute the Jellyfish pairs \JF(\qa,\qb) with \qa\ and \qb\ as heads which takes $O(n^2)$ time as explained in Section~\ref{sec:computation} and since \Wa\ intersects \Ta\ and \Wb\ intersects \Tb, the last case in the proof of Lemma~\ref{lem:approx} cannot occur, whereby the approximation bound for \Wa\ becomes
\begin{equation}
\len{\Wa} 
\leq
\len{\Wpa} + 2R + \len{\Ca}
\leq
2\sqrt{2}R+\sqrt{2}\len{\Tb}
+
2R
+
4\pi R/3
= (2\sqrt{2}+1+2\pi/3)\maxlen{(\Ta,\Tb)}.
\end{equation}
However, we note that this tour does not necessarily pass through \qa, unless \qa\ lies on the polygon boundary, hence we have to account for an extra reflex chain of length at most $2R$ to guarantee this. Making the same analysis for \Wb, the approximation factor becomes $2\sqrt{2}+2+2\pi/3\approx6.922$ in this case. The time complexity is dominated by computing the jellyfish, giving us the following theorem. 

\begin{theorem}\label{thm:fixed}
The algorithm computes a $6.922$-approx\-ima\-tion of the fixed minmax two-watchman route and a $13.845$-approx\-ima\-tion of the fixed minsum two-watchman route
in $O(n^2)$ time given two starting points for the two tours. If both starting points lie on the boundary, the approximation factors are $5.922$ and $11.845$, respectively.
\end{theorem}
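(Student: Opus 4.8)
The plan is to reuse the entire machinery of Lemma~\ref{lem:approx}, but with the two prescribed starting points \qa\ and \qb\ playing the role of the jellyfish heads, and to observe that fixing these heads removes exactly the sub-case responsible for the large $5.969$ factor. First I would compute the jellyfish pair \JF(\qa,\qb) directly from the two given heads; as noted in Section~\ref{sec:computation} this costs only $O(n^2)$ time, since with the heads fixed there is no base to locate, and it suffices to compute, for every boundary edge and every endpoint, the two competing tentacles emanating from \qa\ and from \qb\ (each by a shortest-path computation in linear time) and to balance them along the edge where they straddle. I would then reduce the pair as in Step~\ref{alg:twr:approx:reduce} and take the two relative convex hulls \Wa\ and \Wb\ of the reduced jellyfish. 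That $(\Wa,\Wb)$ is a two-watchman route follows at once from Lemmas~\ref{lem:tentacle} and~\ref{lem:boundary}, exactly as in Lemma~\ref{lem:approx}, because the tentacles of the two jellyfish together see all of \dP.

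For the ratio, the decisive point is that in the fixed setting the optimal tour \Ta\ is \emph{required} to pass through \qa, which is precisely the head from which every tentacle of \Wa\ emanates; hence \qa\ lies in \Wa\ or its interior, and \Ta\ necessarily intersects \Wa\ (and symmetrically \Tb\ intersects \Wb). Consequently the configuration underlying the costly sub-case of Lemma~\ref{lem:approx} — where the anchoring tour lies entirely outside \Wa, forcing \Wpa\ to be bounded by the circular-locus estimate of Inequality~(\ref{eqn:maxDa}) — cannot occur. I would re-run that case analysis and argue that, since \Ta\ cannot be the far tour, the remaining tour must intersect \Wpa, so that \Wpa\ is governed by the smaller bound of Inequality~(\ref{eqn:opttour}). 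Combining this with $\len{\Ca}\leq 4\pi R/3$ from Inequality~(\ref{eqn:semicircle}) and the decomposition of Inequality~(\ref{eqn:nonintersection}), with $R=\maxlen{(\Ta,\Tb)}/2$ and $\len{\Tb}\leq\maxlen{(\Ta,\Tb)}$, yields
\[
\len{\Wa}\leq\len{\Wpa}+2R+\len{\Ca}\leq 2\sqrt{2}R+\sqrt{2}\len{\Tb}+2R+\tfrac{4\pi}{3}R\leq\big(2\sqrt{2}+1+\tfrac{2\pi}{3}\big)\maxlen{(\Ta,\Tb)}.
\]

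Finally I would account for the requirement that the reported tour actually pass through the given point. When \qa\ is interior to \P\ the relative convex hull \Wa\ contains \qa\ in its interior, so I would append a reflex chain from \Wa\ to \qa\ and back; since every point of \Wa\ is within geodesic distance $R$ of \qa, this detour adds at most $2R=\maxlen{(\Ta,\Tb)}$, giving the factor $2\sqrt{2}+2+2\pi/3\approx 6.922$. When \qa\ lies on \dP\ it is automatically a vertex of \Wa\ and no detour is needed, leaving $2\sqrt{2}+1+2\pi/3\approx 5.922$. The identical argument bounds \len{\Wb}, the minsum factors follow by doubling via Inequality~(\ref{math:sumleq2max}), and the total running time is dominated by the $O(n^2)$ jellyfish computation. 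I expect the main obstacle to be the careful re-examination of the final case of Lemma~\ref{lem:approx} after swapping which tour is anchored: one must verify that forcing \Ta\ to meet \Wa\ genuinely precludes the far-tour configuration behind Inequality~(\ref{eqn:maxDa}) and leaves only the intersecting configuration of Inequality~(\ref{eqn:opttour}), rather than merely relabeling the tours and re-deriving the same large bound.
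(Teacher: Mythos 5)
Your proposal is correct and follows essentially the same route as the paper: compute the jellyfish pair directly from the two given heads in $O(n^2)$ time, observe that each prescribed starting point forces the corresponding optimal tour to intersect the relative convex hull built around it so that the circular-locus case of Lemma~\ref{lem:approx} (Inequality~(\ref{eqn:maxDa})) is excluded and Inequality~(\ref{eqn:opttour}) applies, and then add the $2R$ detour when a head is interior. The obstacle you flag about swapping which tour is anchored is resolved exactly as you suspect --- the anchored tour is at distance $0\leq R$ from its head, which is precisely the condition under which the paper reverts to the bounding-properties argument --- so nothing beyond the paper's own reasoning is needed.
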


\section{Conclusions}\label{sec:conc}
We have shown a polynomial time algorithms for computing constant factor approximations for the minmax and minsum two-watchman route in a simple polygon.


Our algorithms rely heavily on the fact that for two tours it is sufficient to guarantee that the boundary is seen to ensure that the complete polygon is seen. This does not hold for three or more tours. Thus, our method for the two-watchman tours does not generalize to the problem for three or more watchmen. Solving these problems within a constant factor bound remains elusive.

Establishing the complexity for the minsum two-watchman route is still open although our algorithm provides a polynomial time $11.939$-approximation.

The authors would like to thank Doc.~Åse Jevinger and Prof.~Pawe\l~\.Zyli\'nski for fruitful discussions that have improved the text immensely.



{
\small
\bibliographystyle{sccaptitleplain}
\bibliography{artgallery,cg,covers,geometry,holes,inapprox,linkcenter,rayshoot,rectgallery,shortpath,triangulation,%
visibility,watchmen,selfreference}
}

\appendix
\section{Proof of Lemma~\ref{lem:tentaclemotion}}\label{app:tentaclemotion}

{\bf Lemma.}~{\em
Let \q\ move a distance $\delta$ to \q'\ on a line segment $s$ and let \rr\ move a distance $\epsilon$ to \rr', where both \rr\ and \rr'\ lie in the open interval $]\v,\v'[$ of a boundary edge $\be=[\v,\v']$, in such a way that the first segment of the tentacles from \q\ and \q'\ intersect the same reflex vertex, if the tentacle consists of multiple segments, and \cut{\ZT(\q,\rr)} and \cut{\ZT(\q',\rr')} have the same hiding vertex, then
$$
\len{\ZT(\q',\rr')}  
=
\len{\ZT(\q,\rr)} +
{\cal F}(\delta,\epsilon),
$$
such that 
\begin{align*}
{\cal F}(\delta,\epsilon) & = 
- F_0 + \sqrt{F_0^2 + F_1\delta + F_2\delta^2}
- F_3 + \frac{F_3+F_4\epsilon+F_5\delta+F_6\epsilon\delta+F_7\epsilon^2+F_8\epsilon^2\delta}{\sqrt{1+F_9\epsilon+F_{10}\epsilon^2+F_{11}\epsilon^3+F_{12}\epsilon^4}}
\nonumber\\
& \qquad
- F_{13} + \sqrt{\frac{F_{13}^2+F_{14}\epsilon+F_{15}\delta+F_{16}\epsilon\delta+F_{17}\epsilon^2+F_{18}\delta^2+F_{19}\epsilon^2\delta+F_{20}\epsilon\delta^2+F_{21}\epsilon^2\delta^2}{1+F_{22}\epsilon+F_{23}\epsilon^2}},
\end{align*}
where $F_0,\dots,F_{23}$ are constants.
}

We begin with the assumption that \ZT(\q,\rr) consists of at least two segments whose end points touch reflex boundary vertices of the polygon. Since this is the case, we can separate the motion from \rr\ to \rr'\ on \be\ and the motion from \q\ to \q'\ on $s$ and handle those two cases separately. We first look at the motion from \rr\ to \rr'\ on \be\ and since a small change of \q\ on $s$ does not affect the motion on \be. 

Given a tentacle \ZT(\q,\rr), we emulate a point sliding from \rr\ to \rr'\ along \be. If the tentacle tip intersects the boundary of the polygon at a point \p, where the tentacle sees \rr, then we let the segment $[\rr,\rr'']$ be parallel to $[\p,\p']$ and intersecting the line collinear to $[\uh,\rr']$ at \rr'', where \uh\ is the hiding vertex; see Figure~\ref{slideproofr}(a). The lines through $[\p,\p']$ and $[\v,\v']$ intersect at a point $t$ (unless they are parallel, in which case we assume $t$ to be a point at infinity). Since the triangles $\bigtriangleup\rr'\!,\rr,\rr''$ and $\bigtriangleup\rr'\!,t,\p'$ are similar, we have
\begin{align*}
\frac{\epsilon}{\len{\rr,\rr''}} &= \frac{\len{\rr,t}-\epsilon}{\len{\p,t}+\epsilon'}
\end{align*}
and since the triangles $\bigtriangleup\rr,\uh,\rr''$ and $\bigtriangleup\p,\uh,\p'$ are also similar, we have
\begin{align*}
\frac{\len{\rr,\rr''}}{\len{\uh,\rr}} &= \frac{\epsilon'}{\len{\uh,\p}},
\end{align*}
giving us that 
\begin{align}\epsilon'=\dfrac{\len{\p,t}\cdot\len{\uh,\p}\cdot\epsilon}{\len{\rr,t}\cdot\len{\uh,\rr} - \left(\len{\uh,\p}+\len{\uh,\rr}\right)\cdot\epsilon} = \frac{a\cdot\epsilon}{1-c\cdot\epsilon},\label{eqn:eps to eps prime}
\end{align}
for constants $a$ and $c$ depending on the points \p, \rr, \uh, and~$t$.

Thus, the length of the tentacle as the point moves locally from \rr\ to \rr'\ along \be\ is 
\begin{align}
\len{\ZT(\q,\rr')} & = 
\len{\ZT(\q,\rr)} - \len{\u,\p} + \len{\u,\p'}
\nonumber\\
& =
\len{\ZT(\q,\rr)} - \len{\u,\p} + \sqrt{\len{\u,\p}^2 + \epsilon'^2 - 2\len{\u,\p}\epsilon'\cos\phi}
\nonumber\\
& = 
\len{\ZT(\q,\rr)} - \len{\u,\p} +\frac{\sqrt{
\len{\u,\p}^2(1-c\epsilon)^2 + a^2 \epsilon^2 - 2\len{\u,\p}a\epsilon\cos\phi(1-c\epsilon)
}
}{1 - c\epsilon}
\nonumber\\
& =
\len{\ZT(\q,\rr)} - A_0 + \frac{\sqrt{A_0^2+A_1\epsilon+A_2\epsilon^2}}{1+A_3\epsilon},
\label{eqn:r slidetouch}
\end{align}
for constants $A_0,\ldots,A_3$ that only depend on the points \u, \p, \rr, \uh, $t$, and the angle $\phi$; see Figure~\ref{slideproofr}(a). 
\begin{figure*} 
			    \begin{center} \small
			    \input{\figdirslideproofr.pdf_t}

			    \parcaption{\capw}{\label{slideproofr}Showing the motions of a tentacle along the interior of~\protect\be.}
			    \end{center}
			    \end{figure*}

If the tentacle tip does not touch the boundary of the polygon; see Figure~\ref{slideproofr}(b), then we realize that as the point moves locally from \rr\ to \rr', the length of the tentacle changes as
\begin{align}
\len{\ZT(\q,\rr')} & = 
\len{\ZT(\q,\rr)} - \len{\u,\p} + \len{\u,\p'}
\nonumber\\
&=
\len{\ZT(\q,\rr)} - \len{\u,\p} + \len{\u,\uh}\sin(\gamma-\alpha)
\nonumber\\
& =
\len{\ZT(\q,\rr)} - \len{\u,\p} + \len{\u,\uh}(\sin\gamma\cos\alpha - \cos\gamma\sin\alpha)
\nonumber\\
& =
\len{\ZT(\q,\rr)} - \len{\u,\p} + \len{\u,\uh}\sin\gamma\frac{\len{\uh,\rr}-\epsilon\cos\beta}{
\sqrt{\len{\uh,\rr}^2 -2\epsilon\len{\uh,\rr}\cos\beta +\epsilon^2}}
\nonumber\\
& \qquad\qquad
- \len{\u,\uh}\cos\gamma\frac{\epsilon\sin\beta}{\sqrt{\len{\uh,\rr}^2 -2\epsilon\len{\,\rr}\cos\beta +\epsilon^2}}
\nonumber\\
& =
\len{\ZT(\q,\rr)} - \len{\u,\p} + \frac{
\len{\u,\uh}\len{\uh,\rr}\sin\gamma - \epsilon\len{\u,\uh}(\cos\beta\sin\gamma + \cos\gamma\sin\beta)
}{\sqrt{\len{\uh,\rr}^2 -2\epsilon\len{\uh,\rr}\cos\beta +\epsilon^2}}
\nonumber\\
& =
\len{\ZT(\q,\rr)} - \len{\u,\p} + \frac{
\len{\u,\p}\len{\uh,\rr} - \epsilon\len{\u,\uh}\sin(\beta+\gamma)}{\sqrt{\len{\uh,\rr}^2 -2\epsilon\len{\uh,\rr}\cos\beta +\epsilon^2}}
\nonumber\\
& =
\len{\ZT(\q,\rr)} - B_0 + \frac{B_0+B_1\epsilon}{\sqrt{1+B_2\epsilon+B_3\epsilon^2}}
\label{eqn:r slidefree}
\end{align}
for constants $B_0,\ldots,B_3$ that only depend on the points \u, \p, \rr, \uh\ and the angles $\beta$ and $\gamma$; see Figure~\ref{slideproofr}(b).

Let us now consider the change that happens as the head moves a distance $\delta$ from \q\ to \q'\ on $s$, still under the assumption that \ZT(\q,\rr) and \ZT(\q',\rr') consists of at least two segments where the end points different from \q\ (\q') and \rr\ (\rr') touch reflex vertices of the polygon thus making turns at the boundary.
%
The length of the tentacle changes as
\begin{align}
\len{\ZT(\q',\rr)} & = 
\len{\ZT(\q,\rr)} - \len{\q,\u} + \len{\q',\u}
\nonumber\\
& =
\len{\ZT(\q,\rr)} - \len{\q,\u} + \sqrt{\len{\q,\u}^2 + \delta^2 -2\len{\q,\u}\delta\cos(\pi-\theta)}
\nonumber\\
& =
\len{\ZT(\q,\rr)} - C_0 + \sqrt{C_0^2+C_1\delta+C_2\delta^2}
,\label{eqn:head many}
\end{align}
for constants $C_0$, $C_1$ and $C_2$ that only depend on the points \u, \q, and the angle $\theta$; see Figure~\ref{slideproofq}(a).%
\begin{figure*} 
			    \begin{center} \small
			    \input{\figdirslideproofq.pdf_t}

			    \parcaption{\capw}{\label{slideproofq}Showing the motion of a tentacle along line segment~\protect\e.}
			    \end{center}
			    \end{figure*}

\medskip
If the \ZT(\q,\rr) and \ZT(\q',\rr') each consists of a single segment, then either the tip touches the boundary or it does not. 
If the tentacle \ZT(\q,\rr) is a single segment that intersects \cut{\ZT(\q,\rr)} without the tip touching the boundary, then 
\len{\q',\p'} depends on both $\epsilon$ and $\delta$. The length \len{\q',\p'} is established from \len{\q,\p} by first computing \len{\q,\ph}, a case we have already solved; see Equality~\ref{eqn:r slidefree} and Figure~\ref{slideproofr}(b); and then computing \len{\q',\p'} from \len{\q,\ph}. $\len{\q',\p'}=\len{\q,\ph}(1-\delta/\len{\q,t'})$ since $\len{\q',\p'}/(\len{\q,t'}-\delta)=\len{\q,\ph}/\len{\q,t'}$ by similarity.
Hence, the length of the tentacle changes as
\begin{align}
\len{\ZT(\q',\rr')} & = 
\len{\ZT(\q,\rr)} - \len{\q,\p} + \len{\q',\p'}
\nonumber\\
& =
\len{\ZT(\q,\rr)} - \len{\q,\p} + \left(1-\frac{\delta}{\len{\q,t'}}\right)\len{\q,\ph}
\nonumber\\
& =
\len{\ZT(\q,\rr)} - \len{\q,\p} + \left(1-\frac{\delta}{\len{\q,t}-\epsilon'}\right)\len{\q,\uh}\sin(\gamma-\alpha)
\nonumber\\
& =
\len{\ZT(\q,\rr)} - \len{\q,\p} + \left(1-\frac{\delta}{\len{\q,t}-\frac{a\epsilon}{1-c\epsilon}}\right)\len{\q,\uh}(\sin\gamma\cos\alpha - \sin\gamma\cos\alpha)
\nonumber\\
& =
\len{\ZT(\q,\rr)} - \len{\q,\p} + \mbox{}
\nonumber\\
& \qquad\qquad
+ \left(1-\frac{\delta}{\len{\q,t}-\frac{a\epsilon}{1-c\epsilon}}\right)\len{\q,\uh}\sin\gamma\frac{\len{\uh,\rr}-\epsilon\cos\beta}{\sqrt{\len{\uh,\rr}^2-2\epsilon\len{\uh,\rr}\cos\beta+\epsilon^2}}
\nonumber\\
& \qquad\qquad
- \left(1-\frac{\delta}{\len{\q,t}-\frac{a\epsilon}{1-c\epsilon}}\right)\len{\q,\uh}\cos\gamma\frac{\epsilon\sin\beta}{\sqrt{\len{\uh,\rr}^2-2\epsilon\len{\uh,\rr}\cos\beta+\epsilon^2}}
\nonumber\\
& =
\len{\ZT(\q,\rr)} - \len{\q,\p} + \mbox{}
\nonumber\\
& \qquad\qquad
+ \left(1-\frac{\delta(1-c\epsilon)}{\len{\q,t}-(c\len{\q,t}+a)\epsilon}\right)
\frac{\len{\q,\p}\len{\uh,\rr}-\epsilon\len{\u,\uh}\sin(\beta+\gamma)}{\sqrt{\len{\uh,\rr}^2-2\epsilon\len{\uh,\rr}\cos\beta+\epsilon^2}}
\nonumber\\
& =
\len{\ZT(\q,\rr)} - D_0 + \frac{D_0+D_1\epsilon+D_2\delta+D_3\epsilon\delta+D_4\epsilon^2+D_5\epsilon^2\delta}{\sqrt{1+D_6\epsilon+D_7\epsilon^2+D_8\epsilon^3+D_9\epsilon^4}}
,\label{eqn:head single slidefree}
\end{align}
for constants $D_0,\ldots,D_9$ that only depend on the points \p, \q, \rr, and \uh, together with the angles $\beta$ and $\gamma$; see Figure~\ref{slideproofq}(b).%

If the tentacle \ZT(\q,\rr) is a single segment that intersects \cut{\ZT(\q,\rr)} with the tip touching the boundary, then the length \len{\q',\p'} is established from \len{\q,\p} by the cosine theorem. Hence, the length of the tentacle changes as
\begin{align}
\len{\ZT(\q',\rr')} & = 
\len{\ZT(\q,\rr)} - \len{\q,\p} + \len{\q',\p'}
\nonumber\\
& =
\len{\ZT(\q,\rr)} - \len{\q,\p} 
+ \sqrt{(\len{\p,t}-\epsilon')^2 + (\len{\q,t}-\delta)^2 - 2(\len{\p,t}-\epsilon')(\len{\q,t}-\delta)\cos\phi}
\nonumber\\
& =
\len{\ZT(\q,\rr)} - \len{\q,\p} + \mbox{}
\nonumber\\
&  \qquad
+ \sqrt{\left(\len{\p,t}-\frac{a\epsilon}{1-c\epsilon}\right)^2 + (\len{\q,t}-\delta)^2 - 2\left(\len{\p,t}-\frac{a\epsilon}{1-c\epsilon}\right)(\len{\q,t}-\delta)\cos\phi}
\nonumber\\
& =
\len{\ZT(\q,\rr)} - E_0 
+ \sqrt{\frac{E_0^2+E_1\epsilon+E_2\delta+E_3\epsilon\delta+E_4\epsilon^2+E_5\delta^2+E_6\epsilon^2\delta+E_7\epsilon\delta^2+E_8\epsilon^2\delta^2}{1+E_{9}\epsilon+E_{10}\epsilon^2}}
,\label{eqn:head single slidetouch}
\end{align}
for constants $E_0,\ldots,E_{10}$ that only depend on the points \p, \q, \rr, \uh\ and the angle $\phi$; see Figure~\ref{slideproofq}(c).%

\bigskip
Combining Equalities~(\ref{eqn:r slidetouch})--(\ref{eqn:head single slidetouch}), we obtain the equality
\begin{align}
\len{\ZT(\q',\rr')} & = 
\len{\ZT(\q,\rr)} - F_0 + \sqrt{F_0^2 + F_1\delta + F_2\delta^2}
- F_3 + \frac{F_3+F_4\epsilon+F_5\delta+F_6\epsilon\delta+F_7\epsilon^2+F_8\epsilon^2\delta}{\sqrt{1+F_9\epsilon+F_{10}\epsilon^2+F_{11}\epsilon^3+F_{12}\epsilon^4}}
\nonumber\\
& \qquad
- F_{13} + \sqrt{\frac{F_{13}^2+F_{14}\epsilon+F_{15}\delta+F_{16}\epsilon\delta+F_{17}\epsilon^2+F_{18}\delta^2+F_{19}\epsilon^2\delta+F_{20}\epsilon\delta^2+F_{21}\epsilon^2\delta^2}{1+F_{22}\epsilon+F_{23}\epsilon^2}},
\end{align}
for constants $F_0,\ldots,F_{23}$, as claimed.
We note further that since $F_0$, $F_3$, and $F_{13}$ represent actual distances, these must be non-negative.\hfill$\Box$

\end{document}